\newcommand\copyrighttext{%
  \footnotesize \textcopyright 2020 IEEE. Personal use of this material is permitted.
  Permission from IEEE must be obtained for all other uses, in any current or future
  media, including reprinting/republishing this material for advertising or promotional
  purposes, creating new collective works, for resale or redistribution to servers or
  lists, or reuse of any copyrighted component of this work in other works. DOI:10.1109/TVT.2020.3031657
  }
\newcommand\copyrightnotice{%
\begin{tikzpicture}[remember picture,overlay]
\node[anchor=south,yshift=8pt] at (current page.south) {\fbox{\parbox{\dimexpr\textwidth-\fboxsep-\fboxrule\relax}{\copyrighttext}}};
\end{tikzpicture}%
}
\begin{document}

\title{Intelligent Reflecting Surface-Assisted Millimeter Wave Communications:
Joint Active and Passive Precoding Design}

\author{Peilan Wang, Jun Fang, Xiaojun Yuan, Zhi Chen, and Hongbin Li, ~\IEEEmembership{Fellow,~IEEE}
\thanks{Peilan Wang, Jun Fang, Xiaojun Yuan and Zhi Chen are with the National Key Laboratory
of Science and Technology on Communications, University of
Electronic Science and Technology of China, Chengdu 611731, China,
Email: JunFang@uestc.edu.cn}
\thanks{Hongbin Li is
with the Department of Electrical and Computer Engineering,
Stevens Institute of Technology, Hoboken, NJ 07030, USA, E-mail:
Hongbin.Li@stevens.edu}
\thanks{This work was supported in part by the National Science
Foundation of China under Grant 61829103. }}

\maketitle

\copyrightnotice

\begin{abstract}
Millimeter wave (MmWave) communications is capable of supporting
multi-gigabit wireless access thanks to its abundant spectrum
resource. However, severe path loss and high directivity make it
vulnerable to blockage events, which can be frequent in indoor and
dense urban environments. To address this issue, in this paper, we
introduce intelligent reflecting surface (IRS) as a new technology
to provide effective reflected paths to enhance the coverage of
mmWave signals. In this framework, we study joint active and
passive precoding design for IRS-assisted mmWave systems, where
multiple IRSs are deployed to assist the data transmission from a
base station (BS) to a single-antenna receiver. Our objective is
to maximize the received signal power by jointly optimizing the
BS's transmit precoding vector and IRSs' phase shift coefficients.
Although such an optimization problem is generally non-convex, we
show that, by exploiting some important characteristics of mmWave
channels, an optimal closed-form solution can be derived for the
single IRS case and a near-optimal analytical solution can be
obtained for the multi-IRS case. Our analysis reveals that the
received signal power increases quadratically with the number of
reflecting elements for both the single IRS and multi-IRS cases.
Simulation results are included to verify the optimality and
near-optimality of our proposed solutions. Results also show that
IRSs can help create effective virtual line-of-sight (LOS) paths
and thus substantially improve robustness against blockages in
mmWave communications.
\end{abstract}

\begin{keywords}
Intelligent reflecting surfaces (IRS)-assisted mmWave systems,
joint active and passive precoding design.
\end{keywords}

\section{Introduction}
Millimeter-wave (mmWave) communication is a promising technology
for future cellular networks
\cite{RappaportMurdock11,RanganRappaport14,GhoshThomas14}. It has
the potential to offer gigabits-per-second communication data
rates by exploiting the large bandwidth available at mmWave
frequencies. A key challenge for mmWave communication is that
signals experience a much more significant path loss over mmWave
frequency bands as compared with the path attenuation over lower
frequency bands \cite{SwindlehurstAyanoglu14}. To compensate for
the severe path loss in mmWave systems, large antenna arrays are
generally used to achieve significant beamforming gains for data
transmission \cite{AlkhateebMo14,LiMasouros16,ZhangGe16}. On the
other hand, high directivity makes mmWave communications
vulnerable to blockage, which can be frequent in indoor and dense
urban environments. For instance, due to the narrow beamwidth of
mmWave signals, a very small obstacle, such as a person's arm, can
effectively block the link \cite{AbariBharadia17}. To address this
issue, in some prior works, e.g.
\cite{YangDu15,ZubairJangsher19,NiuDing19}, relays are employed to
overcome blockage and improve the coverage of mmWave signals.

Recently, to address the blockage issue and enable indoor mobile
mmWave networks, reconfigurable reflect-arrays (also referred to
as intelligent reflecting surfaces) were introduced to establish
robust mmWave connections for indoor networks even when the
line-of-sight (LOS) link is blocked by obstructions, and the
proposed solution was validated by a test-bed with $14\times 16$
reflector units \cite{TanSun18}. Intelligent reflecting surface
(IRS) has been recently proposed as a promising new technology for
realizing a smart and programmable wireless propagation
environment via software-controlled reflection
\cite{CuiQi14,LiaskosNie18}. Specifically, IRS, made of a newly
developed metamaterial, is a planar array comprising a large
number of reconfigurable passive elements. With the aid of a smart
micro controller, each element can independently reflect the
incident signal with a reconfigurable amplitude and phase shift.
By properly adjusting the phase shifts of the passive elements,
the reflected signals can add coherently at the desired receiver
to improve the signal power or destructively at non-intended
receivers to suppress interference \cite{WuZhang19a}.

IRS-aided wireless communications have attracted much attention
recently \cite{WuZhang18,WuZhang19a,YanKuai19,YangZhang19,
HuangZappone18,HuangZappone19,NadeemKammoun19,HeYuan19,EmilLuca19}.
A key problem for IRS-aided systems is to jointly optimize the
active beamforming vector at the BS and the reflection
coefficients at the IRS to achieve different objectives. Such a
problem was studied in a single-user scenario, where the objective
was to maximize the receive signal power \cite{WuZhang18}. A
similar problem was considered in an orthogonal frequency division
multiplexing (OFDM)-based communication system \cite{YangZhang19},
with the objective of maximizing the achievable rate. In addition,
the joint BS-IRS optimization problem was investigated in a
downlink multi-user scenario, e.g.
\cite{HuangZappone18,HuangZappone19,NadeemKammoun19}. In
\cite{CuiZhang19,YuXu19,ShenXu19,GuanWu19,MishraJohansson19,WuZhang19b,LiBin19},
IRS was also considered as an auxiliary facility to assist secret
communications, unmanned aerial vehicle (UAV) communications and
wireless power transfer. In \cite{StefanRenzo19}, the joint
beamforming problem was studied to maximize the capacity of an
IRS-assisted MIMO indoor mmWave system.

Inspired by encouraging results reported in \cite{TanSun18}, in
this paper, we consider a scenario where multiple IRSs are
deployed to assist downlink point-to-point mmWave communications.
A joint active and passive precoding design problem is studied,
where the objective is to maximize the received signal power by
jointly optimizing the BS's transmit precoding vector and IRSs'
phase shift coefficients. Note that such a joint active and
passive precoding problem is non-convex and has been studied in
previous works \cite{WuZhang18,WuZhang19a} for conventional
microwave communication systems, where a single IRS is deployed to
assist the data transmission from the BS to the user. In
\cite{WuZhang18}, this non-convex problem was relaxed as a convex
semidefinite programming (SDP) problem. Nevertheless, the proposed
approach is sub-optimal and does not have an analytical solution.
In addition, solving the SDP problem usually involves a high
computational complexity.

In this paper, we will revisit this joint active and passive
beamforming problem from a mmWave communication perspective. We
show that, by exploiting some inherent characteristics of mmWave
channels, in particular an approximately rank-one structure of the
BS-IRS channel, an optimal closed-form solution can be derived for
the single IRS case and a near-optimal analytical solution can be
obtained for the multi-IRS case. Based on the analytical
solutions, we derive the maximum achievable average received
power, which helps gain insight into the effect of different
system parameters, including the number of passive reflecting
elements and the transmitter's antennas, on the system
performance. Our work focuses on a single data stream transmission
from the BS to the user. Although it is desirable to exploit
point-to-point multi-stream communications in order to improve the
spectral efficiency and achieve high data rates, there are still
some important scenarios where only a single-stream transmission
is available due to the rank-deficiency of the cascade channel
between the BS and the UE, or due to the use of a single
antenna/RF chain at the UE. In particular, mmWave has limited
diffraction and reflection abilities. Hence, multi-stream mmWave
communications may not be available for some indoor or outdoor
environments where the propagation is dominated by the LOS
component.

We noticed that the joint active and passive beamforming problem
for multi-user mmWave systems was studied in \cite{PradhanLi20},
where a sophisticated gradient-projection (GP) method was
developed. Nevertheless, due to the complex nature of the problem,
no analytical solution is available for the multi-user scenario.
In this case, the performance gain brought by the IRS has to be
demonstrated through numerical results as conducting a theoretical
analysis of the system performance is rather difficult.


The rest of the paper is organized as follows. In Section
\ref{sec:system-model}, the system model and the joint active and
passive precoding problem are discussed. The joint active and
passive precoding problem with a single IRS is studied in Section
\ref{sec:SingleIRS}, where a closed-form optimal solution is
developed and the average received power is analyzed. The joint
active and passive precoding problem with multiple IRSs is then
studied in Section \ref{sec:multi-IRS}, where a near-optimal
analytical solution is proposed. The extension of our proposed
solution to low-resolution phase shifters is discussed in Section
\ref{sec:low-resolution-PS}. Simulation results are presented in
Section \ref{sec:simulation-results}, followed by concluding
remarks in Section \ref{sec:conclusions}.

\section{System Model and Problem Formulation} \label{sec:system-model}
We consider an IRS-assisted mmWave downlink system as illustrated
in Fig.\ref{fig1}, where multiple IRSs are deployed to assist the
data transmission from the BS to a single-antenna user. Suppose
$K$ IRSs are employed to enhance the BS-user link, and the number
of reflecting units at each IRS is denoted by $M$. The BS is
equipped with $N$ antennas. Let ${\boldsymbol{h}}_d \in \mathbb
C^{N}$ denote the channel from the BS to the user,
$\boldsymbol{G}_k \in \mathbb C^{M\times N}$ denote the channel
from the BS to the $k$th IRS, and $\boldsymbol{h}_{{r_k}} \in
\mathbb C^{M}$ denote the channel from the $k$th IRS to the user.
Each element on the IRS behaves like a single physical point which
combines all the received signals and then re-scatters the
combined signal with a certain phase shift \cite{WuZhang18}. Let
$\theta_{k,m}\in [0,2\pi]$ denote the phase shift associated with
the $m$th passive element of the $k$th IRS. Define
\begin{align}
\boldsymbol{\Theta}_k\triangleq\text{diag}(
e^{j\theta_{k,1}},\ldots, e^{j\theta_{k,M}})
\end{align}
Let $\boldsymbol{w}\in\mathbb{C}^{N}$ denote the
precoding/beamforming vector used by the BS. The signal received
at the user can then be expressed as
\begin{align}
y = \left(\sum_{k=1}^K \boldsymbol{h}_{r_k}^H
\boldsymbol{\Theta}_k \boldsymbol{G}_k + \boldsymbol{h}_d^H
\right)\boldsymbol{w} s + \epsilon \label{received-signal-model}
\end{align}
where $s$ is the transmitted signal which is modeled as a random
variable with zero mean and unit variance, and $\epsilon$ denotes
the additive white Gaussian noise with zero mean and variance
$\sigma^2$. Note that in the above model, signals that are
reflected by the IRS two or more times are ignored due to the high
path loss of mmWave transmissions. Accordingly, the signal power
received at the user is given as
\begin{align}
\gamma=\left|\bigg(\sum_{k=1}^K \boldsymbol{h}_{r_k}^H
\boldsymbol{\Theta}_k \boldsymbol{G}_k + \boldsymbol{h}_d^H
\bigg)\boldsymbol{w}\right|^2
\end{align}

\begin{figure}[!t]
    \centering
    {\includegraphics[width=3.5in]{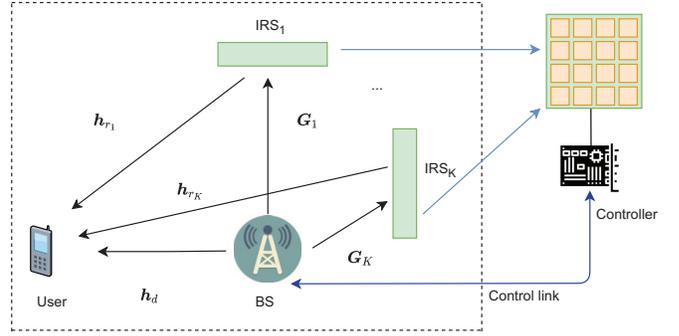}}
    \caption{IRS-assisted mmWave downlink system.} \label{fig1}
\end{figure}

In this paper, we assume that the knowledge of the global channel
state information is available. Channel estimation for
IRS-assisted systems can be found in, e.g.
\cite{DeepakHaakan19,WangFang19b,ChenLiang19,LiuGao20}. In particular,
\cite{WangFang19b,ChenLiang19,LiuGao20} discussed how to estimate the
channel for IRS-assisted mmWave systems. We aim to devise the
precoding vector $\boldsymbol{w}$ and the diagonal phase shift
matrices $\{\boldsymbol{\Theta}_k\}$ to maximize the received
signal power, i.e.
\begin{align}
\max_{\boldsymbol{w},\{\boldsymbol{\Theta}_k\}}\quad &
\left|\bigg(\sum_k^K \boldsymbol{h}_{r_k}^H \boldsymbol{\Theta}_k
\boldsymbol{G}_k + \boldsymbol{h}_d^H
\bigg)\boldsymbol{w}\right|^2 \nonumber\\
\text{s.t.} \quad & \|\boldsymbol{w}\|_2^2\leq p \nonumber\\
& \boldsymbol{\Theta}_k=\text{diag}( e^{j\theta_{k,1}},\ldots,
e^{j\theta_{k,M}}) \quad \forall k \label{opt1}
\end{align}
where $p$ denotes the maximum transmit power at the BS. Note that
here we only consider the communication power. In practical
systems, the computation cost may also need to be considered in
order to achieve a good balance between the energy efficiency and
the spectral efficiency \cite{GeSun18}. Such a tradeoff for
IRS-aided mmWave systems is an interesting and important topic
worthy of future investigation. The problem (\ref{opt1}) is
referred to as joint active and passive beamforming. Note that the
optimization problem (\ref{opt1}) with $K=1$ has been studied in
\cite{WuZhang18}, where the nonconvex problem was relaxed as a
convex semidefinite programming (SDP) problem. Nevertheless, the
proposed approach is generally sub-optimal and does not have an
analytical solution. Besides, solving the SDP problem involves a
high computational complexity.

In this paper, we will revisit this joint active and passive
beamforming problem for mmWave communications by exploiting some
inherent characteristics of mmWave channels. Specifically, mmWave
channels are typically sparsely-scattered. A widely used
Saleh-Valenzuela (SV) channel model for mmWave communications is
given as \cite{AyachRajagopal14,GaoDai16,AkdenizLiu14}:
\begin{align}
\boldsymbol{H}=\sqrt{\frac{N_t
N_r}{L}}\bigg(\beta_0\boldsymbol{a}_{r}(\varphi_{0}^{r})
\boldsymbol{a}_{t}^{H}(\varphi_{0}^{t})+\sum_{i=1}^{L-1}\beta_i\boldsymbol{a}_{r}(\varphi_{i}^{r})
\boldsymbol{a}_{t}^{H}(\varphi_{i}^{t})\bigg)
\end{align}
where $N_t$ and $N_r$ respectively denote the number of antennas
at the transmitter and the receiver, $L$ is the total number of
paths, $\beta_0\boldsymbol{a}_{r}(\varphi_{0}^{r})
\boldsymbol{a}_{t}(\varphi_{0}^{t})$ is the LOS component with
$\beta_0$ representing the complex gain, $\varphi_{0}^{r}$
representing the angle of arrival at the receiver, and
$\varphi_{0}^{t}$ representing the angle of departure at the
transmitter, and $\beta_i\boldsymbol{a}_{r}(\varphi_{i}^{r})
\boldsymbol{a}_{t}^{H}(\varphi_{i}^{t})$ denotes the $i$th
non-line-of-sight (NLOS) component. Also,
$\boldsymbol{a}_{r}(\varphi_{i}^{r})$ and
$\boldsymbol{a}_{t}(\varphi_{i}^{t})$ denote the array response
vectors associated with the receiver and the transmitter,
respectively. In addition to the sparse scattering
characteristics, many measurement campaigns reveal that the power
of the mmWave LOS path is much higher (about 13dB higher) than the
sum of power of NLOS paths
\cite{Muhi-EldeenIvrissimtzis10,SamimiMacCartney16}. Hence, any
system that is not centered around the transmission via the direct
LOS path usually gives only limited gains \cite{StefanRenzo19}.
Motivated by this fact, it is highly desirable to ensure that the
channel between the BS and each IRS is LOS dominated. In practice,
with the knowledge of the location of the BS, IRSs can be
installed within sight of the BS. Since the power of NLOS paths is
negligible compared to that of the LOS path, the BS-IRS channel
can be well approximated as a rank-one matrix, i.e.
\begin{align}
\boldsymbol{G}_k\approx\lambda_k\boldsymbol{a}_k\boldsymbol{b}_k^T
\quad \forall k \label{Gk}
\end{align}
where $\lambda_k$ is a scaling factor accounting for antenna and
path gains, $\boldsymbol{a}_k\in\mathbb{C}^{M}$ and
$\boldsymbol{b}_k\in\mathbb{C}^{N}$ represent the normalized array
response vector associated with the IRS and the BS, respectively.
As will be shown later in this paper, this rank-one channel
structure can be utilized to obtain a closed-form solution to
(\ref{opt1}). Also, our simulation results show that our proposed
solution based on this rank-one approximation can achieve a
received signal power that is nearly the same as that attained by
taking those NLOS paths into account.


\section{Joint Precoding Design for Single IRS}\label{sec:SingleIRS}
\subsection{Optimal Solution}
In this section, we first consider the case where there is only a
single IRS, i.e. $K=1$. We omit the subscript $k$ for simplicity
in the single IRS case. The optimization (\ref{opt1}) is
simplified as
\begin{align}
\max_{\boldsymbol{w},\boldsymbol{\Theta}}\quad & \left|\left(
\boldsymbol{h}_{r}^H \boldsymbol{\Theta} \boldsymbol{G} +
\boldsymbol{h}_d^H
\right)\boldsymbol{w}\right|^2 \nonumber\\
\text{s.t.} \quad & \|\boldsymbol{w}\|_2^2\leq p \nonumber\\
& \boldsymbol{\Theta}=\text{diag}( e^{j\theta_{1}},\ldots,
e^{j\theta_{M}}) \label{opt2}
\end{align}
We will show that by exploiting the rank-one structure of the
BS-IRS channel matrix $\boldsymbol{G}$, a closed-form solution to
(\ref{opt2}) can be obtained. Substituting
$\boldsymbol{G}=\lambda\boldsymbol{a}\boldsymbol{b}^T$ into the
objective function of (\ref{opt2}), we obtain
\begin{align}
|( \boldsymbol{h}_{r}^H \boldsymbol{\Theta} \boldsymbol{G} +
\boldsymbol{h}_d^H )\boldsymbol{w}|^2 =&
|\lambda\boldsymbol{h}_{r}^H\boldsymbol{\Theta}\boldsymbol{a}\boldsymbol{b}^T\boldsymbol{w}
+\boldsymbol{h}_d^H\boldsymbol{w}|^2 \nonumber\\
\stackrel{(a)}{=}&
|\eta\boldsymbol{\theta}^T\boldsymbol{g}+\boldsymbol{h}_d^H\boldsymbol{w}|^2
\nonumber\\
\stackrel{(b)}{=}&|\eta\boldsymbol{\bar{\theta}}^T\boldsymbol{g}e^{j\alpha}+\boldsymbol{h}_d^H\boldsymbol{w}|^2
\nonumber\\
\stackrel{(c)}{\leq}&|\eta\boldsymbol{\bar{\theta}}^T\boldsymbol{g}|^2+|\boldsymbol{h}_d^H\boldsymbol{w}|^2+
2|\eta\boldsymbol{\bar{\theta}}^T\boldsymbol{g}|\cdot|\boldsymbol{h}_d^H\boldsymbol{w}|
\end{align}
where in $(a)$, we define
$\eta\triangleq\boldsymbol{b}^T\boldsymbol{w}$,
$\boldsymbol{g}\triangleq
\lambda(\boldsymbol{h}_r^{\ast}\circ\boldsymbol{a})$, $\circ$
denotes the Hadamard (elementwise) product, and
\begin{align}
\boldsymbol{\theta}\triangleq
[e^{j\theta_{1}}\phantom{0}\ldots\phantom{0} e^{j\theta_{M}}]^T
\end{align}
in $(b)$, we write
$\boldsymbol{\theta}=\boldsymbol{\bar{\theta}}e^{j\alpha}$, and
the inequality $(c)$ becomes an equality when the arguments (also
referred to as phases) of the two complex numbers
$\eta\boldsymbol{\bar{\theta}}^T\boldsymbol{g}e^{j\alpha}$ and
$\boldsymbol{h}_d^H\boldsymbol{w}$ are identical. It should be
noted that we can always find an $\alpha$ such that the arguments
of $\beta\boldsymbol{\bar{\theta}}^T\boldsymbol{g}e^{j\alpha}$ and
$\boldsymbol{h}_d^H\boldsymbol{w}$ are identical, although at this
point we do not know the exact value of $\alpha$. Therefore the
optimization (\ref{opt2}) can be rewritten as
\begin{align}
\max_{\boldsymbol{w},\boldsymbol{\bar{\theta}}}\quad &
|\eta\boldsymbol{\bar{\theta}}^T\boldsymbol{g}|^2+|\boldsymbol{h}_d^H\boldsymbol{w}|^2+
2|\eta\boldsymbol{\bar{\theta}}^T\boldsymbol{g}|\cdot|\boldsymbol{h}_d^H\boldsymbol{w}|
 \nonumber\\
\text{s.t.} \quad & \|\boldsymbol{w}\|_2^2\leq p  \label{opt3}
\end{align}
It is clear that the optimization of $\boldsymbol{\bar{\theta}}$
is independent of $\boldsymbol{w}$, and
$\boldsymbol{\bar{\theta}}$ can be solved via
\begin{align}
\max_{\boldsymbol{\bar{\theta}}}\quad &
|\boldsymbol{\bar{\theta}}^T\boldsymbol{g}| \nonumber\\
\text{s.t.} \quad &
\boldsymbol{\bar{\theta}}=[e^{j\bar{\theta}_{1}}\phantom{0}\ldots\phantom{0}
e^{j\bar{\theta}_{M}}]^T
\end{align}
It can be easily verified that the objective function reaches its
maximum $\| \boldsymbol g\|_{1}$ when
\begin{align}
\boldsymbol{\bar{\theta}}^{\star}=[e^{-j\text{arg}(g_1)}\phantom{0}\ldots\phantom{0}e^{-j\text{arg}(g_M)}]^T
\end{align}
where $\text{arg}(x)$ denotes the argument of the complex number
$x$, and $g_m$ denotes the $m$th entry of $\boldsymbol{g}$.

So far we have obtained the optimal solution of
$\boldsymbol{\bar{\theta}}$, which, as analyzed above, is
independent of the optimization variables $\alpha$ and
$\boldsymbol{w}$. Based on this result, the optimization
(\ref{opt2}) can be simplified as
\begin{align}
\max_{\boldsymbol{w},\alpha}\quad & \left|\left(
e^{j\alpha}\boldsymbol{h}_{r}^H
\boldsymbol{\bar{\Theta}}^{\star}\boldsymbol{G} +
\boldsymbol{h}_d^H
\right)\boldsymbol{w}\right|^2 \nonumber\\
\text{s.t.} \quad & \|\boldsymbol{w}\|_2^2\leq p \label{opt4}
\end{align}
where
$\boldsymbol{\bar{\Theta}}^{\star}\triangleq\text{diag}(\boldsymbol{\bar{\theta}}^{\star})$.
For a fixed $\alpha$, it is clear that the optimal precoding
vector $\boldsymbol{w}$, also known as the maximum ratio
transmission (MRT) solution, is given by
\begin{align}
\boldsymbol{w}^{\star}=\sqrt{p}\frac{\left(
e^{j\alpha}\boldsymbol{h}_{r}^H
\boldsymbol{\bar{\Theta}}^{\star}\boldsymbol{G} +
\boldsymbol{h}_d^H \right)^H}{\|e^{j\alpha}\boldsymbol{h}_{r}^H
\boldsymbol{\bar{\Theta}}^{\star}\boldsymbol{G} +
\boldsymbol{h}_d^H\|_2} \label{eqn1}
\end{align}
By substituting the optimal precoding vector
$\boldsymbol{w}^{\ast}$ into (\ref{opt4}), the problem becomes
optimization of $\alpha$:
\begin{align}
\max_{\alpha}\quad \| e^{j\alpha}\boldsymbol{h}_{r}^H
\boldsymbol{\bar{\Theta}}^{\star}\boldsymbol{G} +
\boldsymbol{h}_d^H \|_2^2 \label{opt5}
\end{align}
whose optimal solution can be easily obtained as
\begin{align}
\alpha^{\star}= & -\text{arg}\left((\boldsymbol{h}_{r}^H
\boldsymbol{\bar{\Theta}}^{\star}\boldsymbol{G})\boldsymbol{h}_d\right)
\nonumber\\
=& -\text{arg}\left((\lambda\boldsymbol{h}_{r}^H
\boldsymbol{\bar{\Theta}}^{\star}\boldsymbol{a}\boldsymbol{b}^T)
\boldsymbol{h}_d\right) \nonumber\\
=&-\text{arg}\left(\boldsymbol {b}^T \boldsymbol{h}_d\right)
\label{alpha-optimal}
\end{align}
where the last equality follows from the fact that
$\lambda\boldsymbol{h}_{r}^H\boldsymbol{\bar{\Theta}}^{\star}\boldsymbol{a}=\boldsymbol{g}^T
\boldsymbol{\bar{\theta}}^{\star}=\|\boldsymbol{g}\|_1$ is a
real-valued number. After the optimal value of $\alpha$ is
obtained, the optimal precoding vector can be determined by
substituting (\ref{alpha-optimal}) into (\ref{eqn1}), and the
optimal diagonal phase shift matrix is given as
\begin{align}
\boldsymbol{\Theta}^{\star}=e^{j\alpha^{\star}}\boldsymbol{\bar{\Theta}}^{\star}
\end{align}
We see that under the rank-one BS-IRS channel assumption, a
closed-form solution to the joint active and passive beamforming
problem (\ref{opt2}) can be derived. To calculate this optimal
solution, we only need to compute $\boldsymbol{b}^T
\boldsymbol{h}_d$ and $\boldsymbol{g}$, which involves a
computational complexity of $\mathcal{O}(\max(M,N))$.

\subsection{Power Scaling Law}
We now characterize the scaling law of the average received power
with respect to the number of reflecting elements $M$. For
simplicity, we set $p=1$. Our main results are summarized as
follows.

\newtheorem{proposition}{Proposition}
\begin{proposition}
Assume $\boldsymbol{h}_{r}\sim {\cal
CN}(0,\varrho_{r}^2\boldsymbol{I})$, $\boldsymbol{h}_d\sim {\cal
CN}(0, \varrho_d^2\boldsymbol{I})$, and the BS-IRS channel is
characterized by a rank-one geometric model given as
\begin{align}
\boldsymbol{G}=\sqrt{N M}\rho \boldsymbol{a}\boldsymbol{b}^T
\end{align}
where $\rho$ denotes the complex gain associated with the LOS path
between the BS and the IRS, $\boldsymbol{a}\in\mathbb{C}^{M}$ and
$\boldsymbol{b}\in\mathbb{C}^{N}$ are normalized array response
vectors associated with the IRS and the BS, respectively. Then the
average received power at the user attained by the optimal
solution of (\ref{opt2}) is given as
\begin{align}
\gamma^{\star}=&N M^2 \frac{{\pi}\varrho_{r}^2}{4}
\mathbb{E}[|\rho|^2]+2M\sqrt{N}\mathbb E[|\rho|]\frac{{\pi}
\varrho_r\varrho_d}{4}
\nonumber\\&   + N M\left(2-\frac{\pi}{2} \right) \mathbb E[|\rho|^2] \frac{\varrho_{r}^2}{2}+
N\varrho_d^2 \label{power-scaling-law}
\end{align}
 \label{proposition1}
\end{proposition}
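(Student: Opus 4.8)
The plan is to first reduce the optimal received power $\gamma^{\star}$ of (\ref{opt2}) to a simple expression in the random channel realizations, and then average it term by term; the left‑hand side of (\ref{power-scaling-law}) is to be read as $\mathbb{E}[\gamma^{\star}]$, the expectation being taken over $\rho$, $\boldsymbol{h}_r$ and $\boldsymbol{h}_d$, the array response vectors $\boldsymbol{a},\boldsymbol{b}$ being fixed with the constant‑modulus normalization $|a_m|=1/\sqrt{M}$ and $\|\boldsymbol{b}\|_2=1$.

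\emph{Step 1: closed form for $\gamma^{\star}$.} With $\boldsymbol{G}=\sqrt{NM}\rho\,\boldsymbol{a}\boldsymbol{b}^T$ we have $\lambda=\sqrt{NM}\rho$. Using the identity $\lambda\boldsymbol{h}_r^H\boldsymbol{\bar{\Theta}}^{\star}\boldsymbol{a}=\boldsymbol{g}^T\boldsymbol{\bar{\theta}}^{\star}=\|\boldsymbol{g}\|_1$ established above, the cascaded term collapses to $\boldsymbol{h}_r^H\boldsymbol{\bar{\Theta}}^{\star}\boldsymbol{G}=\|\boldsymbol{g}\|_1\boldsymbol{b}^T$. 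Substituting the MRT precoder (\ref{eqn1}) into (\ref{opt4}) turns the problem into (\ref{opt5}); evaluating (\ref{opt5}) at $\alpha^{\star}=-\text{arg}(\boldsymbol{b}^T\boldsymbol{h}_d)$ with $\|\boldsymbol{b}\|_2=1$ and $p=1$ — this being exactly the $\alpha$ that maximizes the cross term $2\|\boldsymbol{g}\|_1\,\text{Re}(e^{j\alpha}\boldsymbol{b}^T\boldsymbol{h}_d)$ — gives
\begin{align}
\gamma^{\star}=\|\boldsymbol{g}\|_1^2+\|\boldsymbol{h}_d\|_2^2+2\|\boldsymbol{g}\|_1\,|\boldsymbol{b}^T\boldsymbol{h}_d|. \label{eq:sketch-gamma}
\end{align}
Moreover, since $g_m=\lambda h_{r,m}^{\ast}a_m$ and $|a_m|=1/\sqrt{M}$, we get $\|\boldsymbol{g}\|_1=\sqrt{NM}\,|\rho|\sum_{m=1}^{M}|h_{r,m}|\,|a_m|=\sqrt{N}\,|\rho|\sum_{m=1}^{M}|h_{r,m}|$.

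\emph{Step 2: averaging.} Assuming $\rho$, $\boldsymbol{h}_r$, $\boldsymbol{h}_d$ mutually independent, I would average the three terms of (\ref{eq:sketch-gamma}) using standard Rayleigh moments: for $Z\sim\mathcal{CN}(0,\sigma^2)$ one has $\mathbb{E}[|Z|]=\frac{\sqrt{\pi}}{2}\sigma$ and $\mathbb{E}[|Z|^2]=\sigma^2$, hence $\mathbb{E}[|h_{r,m}|]=\frac{\sqrt{\pi}}{2}\varrho_r$, $\mathbb{E}[|h_{r,m}|^2]=\varrho_r^2$, and, since $\boldsymbol{b}^T\boldsymbol{h}_d\sim\mathcal{CN}(0,\varrho_d^2\|\boldsymbol{b}\|_2^2)=\mathcal{CN}(0,\varrho_d^2)$, also $\mathbb{E}[|\boldsymbol{b}^T\boldsymbol{h}_d|]=\frac{\sqrt{\pi}}{2}\varrho_d$. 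By independence of the entries $\{h_{r,m}\}$,
\begin{align}
\mathbb{E}\bigg[\Big(\sum_{m=1}^{M}|h_{r,m}|\Big)^{2}\bigg]=M\varrho_r^2+M(M-1)\frac{\pi\varrho_r^2}{4},
\end{align}
whence $\mathbb{E}[\|\boldsymbol{g}\|_1^2]=N\mathbb{E}[|\rho|^2]\big(M\varrho_r^2+M(M-1)\frac{\pi\varrho_r^2}{4}\big)=NM^2\frac{\pi\varrho_r^2}{4}\mathbb{E}[|\rho|^2]+NM\big(2-\frac{\pi}{2}\big)\mathbb{E}[|\rho|^2]\frac{\varrho_r^2}{2}$, using $1-\frac{\pi}{4}=\frac{1}{2}(2-\frac{\pi}{2})$. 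Also $\mathbb{E}[\|\boldsymbol{h}_d\|_2^2]=N\varrho_d^2$, and by independence $2\,\mathbb{E}\big[\|\boldsymbol{g}\|_1\,|\boldsymbol{b}^T\boldsymbol{h}_d|\big]=2\sqrt{N}\,\mathbb{E}[|\rho|]\cdot M\frac{\sqrt{\pi}}{2}\varrho_r\cdot\frac{\sqrt{\pi}}{2}\varrho_d=2M\sqrt{N}\,\mathbb{E}[|\rho|]\frac{\pi\varrho_r\varrho_d}{4}$. Adding the three contributions reproduces (\ref{power-scaling-law}).

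\emph{Main obstacle.} The Rayleigh‑moment evaluations and the closing algebra are routine; the step that needs care is (\ref{eq:sketch-gamma}): tracking conjugates and transposes so that the BS‑IRS channel really collapses to the rank‑one row vector $\|\boldsymbol{g}\|_1\boldsymbol{b}^T$, and confirming that the earlier $\alpha^{\star}$ is indeed the phase that aligns the two remaining phasors and thus maximizes the cross term. A second point worth stating explicitly is the constant‑modulus property $|a_m|=1/\sqrt{M}$ of the normalized array response vector — this is precisely what converts $\|\boldsymbol{g}\|_1$ into $\sqrt{N}|\rho|\sum_m|h_{r,m}|$ and hence yields the leading $M^2$ term in (\ref{power-scaling-law}).
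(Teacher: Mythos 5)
Your proposal is correct and follows essentially the same route as the paper's Appendix~\ref{appA}: both reduce the optimal received power to $z^2+2z|\boldsymbol{b}^T\boldsymbol{h}_d|+\|\boldsymbol{h}_d\|_2^2$ with $z=\|\boldsymbol{g}\|_1=\sqrt{N}|\rho|\,\|\boldsymbol{h}_r\|_1$ (using the constant-modulus property of $\boldsymbol{a}$), and then average term by term with the same Rayleigh moments and independence arguments. The step you flag as the main obstacle is handled identically in the paper via (\ref{Pr:Single1}) and the choice $\alpha^{\star}=-\mathrm{arg}(\boldsymbol{b}^T\boldsymbol{h}_d)$, so there is no gap.
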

\begin{proof}
See Appendix \ref{appA}.
\end{proof}

From (\ref{power-scaling-law}), we see that the average received
signal power attained by the optimal beamforming solution scales
quadratically with the number of reflecting elements $M$. Such a
``squared improvement'' is due to the fact that the optimal
beamforming solution not only allows to achieve a transmit
beamforming gain of $M$ in the IRS-user link, but it also acquires
a gain of $M$ by coherently collecting signals in the BS-IRS link.
This result implies that scaling up the number of reflecting
elements is a promising way to compensate for the significant path
loss in mmWave wireless communications.

\section{Joint Precoding Design for Multiple IRSs} \label{sec:multi-IRS}
In this section, we return to the joint active and passive
beamforming problem (\ref{opt1}) for the general multi-IRS setup.
Such a problem is more challenging as we need to jointly design
the precoding vector $\boldsymbol{w}$ and a set of phase shift
matrices associated with $K$ IRSs. In the following, by exploiting
the rank-one structure of BS-IRS channels and the
near-orthogonality between array response vectors, we show that a
near-optimal analytical solution can be obtained for this
nonconvex problem.

\subsection{Proposed Solution}
Substituting
$\boldsymbol{G}_k=\lambda_k\boldsymbol{a}_k\boldsymbol{b}_k^T$
into the objective function of (\ref{opt1}), we arrive at
\begin{align}
\quad &\left|\bigg(\sum_{k=1}^K \boldsymbol{h}_{r_k}^H
\boldsymbol{\Theta}_k \boldsymbol{G}_k + \boldsymbol{h}_d^H
\bigg)\boldsymbol{w}\right|^2 \nonumber\\= &
\left|\bigg(\sum_{k=1}^K \lambda_k\boldsymbol{h}_{r_k}^H
\boldsymbol{\Theta}_k \boldsymbol{a}_k\boldsymbol{b}_k^T  +
\boldsymbol{h}_d^H \bigg)\boldsymbol{w}\right|^2 \nonumber
\\
\stackrel{(a)}{=}& \left|
\sum_{k=1}^K\eta_k\boldsymbol{\theta}_k^T\boldsymbol{g}_k+\boldsymbol{h}_d^H\boldsymbol{w}\right|^2
\nonumber\\
\stackrel{(b)}{=}&\left|
\sum_{k=1}^K\eta_k\boldsymbol{\bar{\theta}}_k^T\boldsymbol{g}_k
e^{j\alpha_k}+\boldsymbol{h}_d^H\boldsymbol{w}\right|^2
\nonumber\\
\stackrel{(c)}{\leq}& \sum_{k=1}^K
\left|\eta_k\boldsymbol{\bar{\theta}}_k^T\boldsymbol{g}_k\right|^2
+\sum_{i=1}^K \sum_{j\neq i}^K
|\eta_i\boldsymbol{\bar{\theta}}_i^T
\boldsymbol{g}_i|\cdot|\eta_j\boldsymbol{\bar{\theta}}_j^T\boldsymbol{g}_j| \nonumber \\
&+|\boldsymbol{h}_d^H\boldsymbol{w}|^2+
2\sum_{k=1}^K|\eta_k\boldsymbol{\bar{\theta}}_k^T\boldsymbol{g}_k|\cdot|\boldsymbol{h}_d^H\boldsymbol{w}|
\label{upper-bound}
\end{align}
where in $(a)$, we define
$\eta_k\triangleq\boldsymbol{b}_k^T\boldsymbol{w}$,
$\boldsymbol{g}_k\triangleq
\lambda_k(\boldsymbol{h}_{r_k}^{\ast}\circ\boldsymbol{a}_k)$, and
$\boldsymbol{\theta}_k\triangleq
[e^{j\theta_{k,1}}\phantom{0}\ldots\phantom{0}
e^{j\theta_{k,M}}]^T$, in $(b)$, we write
$\boldsymbol{\theta}_k=\boldsymbol{\bar{\theta}}_k e^{j\alpha_k}$,
and the inequality $(c)$ becomes an equality when the arguments
(or phases) of all complex numbers inside the brackets of $(b)$
are identical. It should be noted that there exist a set of
$\{\alpha_k\}$ such that the arguments of
$\eta_k\boldsymbol{\bar{\theta_k}}^T\boldsymbol{g}_k
e^{j\alpha_k},\forall k$ and $\boldsymbol{h}_d^H\boldsymbol{w}$
are identical, although at this point we do not know the values of
$\{\alpha_k\}$. Therefore \eqref{opt1} is equivalent to maximizing
the upper bound given in (\ref{upper-bound}), i.e.
\begin{align}
\max_{\boldsymbol{w},\{\boldsymbol{\bar{\theta}}_k\}}\quad &
\sum_{k=1}^K
\left|\eta_k\boldsymbol{\bar{\theta}}_k^T\boldsymbol{g}_k\right|^2+\sum_{i=1}^K
\sum_{j\neq i}^K
|\eta_i\boldsymbol{\bar{\theta}}_i^T\boldsymbol{g}_i|\cdot|\eta_j\boldsymbol{\bar{\theta}}_j^T\boldsymbol{g}_j|
\nonumber\\&+|\boldsymbol{h}_d^H\boldsymbol{w}|^2 +
2\sum_{k=1}^K|\eta_k\boldsymbol{\bar{\theta}}_k^T\boldsymbol{g}_k|\cdot|\boldsymbol{h}_d^H\boldsymbol{w}|
 \nonumber\\
\text{s.t.} \quad & \|\boldsymbol{w}\|_2^2\leq p \label{opt6}
\end{align}
From (\ref{opt6}), it is clear that the optimization of
$\{\boldsymbol{\bar{\theta}}_k\}$ can be decomposed into a number
of independent sub-problems, with $\boldsymbol{\bar{\theta}}_k$
solved by
\begin{align}
\max_{\boldsymbol{\bar{\theta}}_k}\quad &
|\boldsymbol{\bar{\theta}}_k^T\boldsymbol{g}_k| \nonumber\\
\text{s.t.} \quad &
\boldsymbol{\bar{\theta}}_k=[e^{j\bar{\theta}_{k,1}}\phantom{0}\ldots\phantom{0}
e^{j\bar{\theta}_{k,M}}]^T
\end{align}
It can be easily verified that the objective function reaches its
maximum $\| \boldsymbol{g}_k\|_{1}$ when
\begin{align}
\boldsymbol{\bar{\theta}}_k^{\star} = [ e^{-j {\rm arg} (g_{k,1})}
\phantom{0}\ldots\phantom{0} e^{-j {\rm arg} (g_{k,M})} ]
\end{align}
where $g_{k,m}$ denotes the $m$th entry of $\boldsymbol{g}_k$.

So far we have obtained the optimal solution of
$\{\boldsymbol{\bar{\theta}}_k\}$, which, as analyzed above, is
independent of the optimization variables $\{\alpha_k\}$ and
$\boldsymbol{w}$. Based on this result, the optimization
(\ref{opt1}) can be reformulated as
\begin{align}
\max_{\boldsymbol{w},\{\alpha_k\}}\quad & \left|\bigg(\sum_{k=1}^K
\lambda_k e^{j\alpha_k}\boldsymbol{h}_{r_k}^H
\boldsymbol{\bar{\Theta}}_k^{\star}
\boldsymbol{a}_k\boldsymbol{b}_k^T + \boldsymbol{h}_d^H
\bigg)\boldsymbol{w}\right|^2 \nonumber\\
\text{s.t.} \quad & \|\boldsymbol{w}\|_2^2\leq p \label{opt7}
\end{align}
where
$\boldsymbol{\bar{\Theta}}_k^{\star}\triangleq\text{diag}(\boldsymbol{\bar{\theta}}_k^{\star})$.
Note that
\begin{align}
\lambda_k\boldsymbol{h}_{r_k}^H
\boldsymbol{\bar{\Theta}}_k^{\star}\boldsymbol{a}_k =
\boldsymbol{g}_k^T
\boldsymbol{\bar{\theta}}_k^{\star}=\|\boldsymbol{g}_k\|_1\triangleq
z_k
\end{align}
is a real-valued number. Thus the objective function of
(\ref{opt7}) can be written in a more compact form as
\begin{align}
\left|\bigg(\sum_{k=1}^K z_k e^{j\alpha_k} \boldsymbol{b}_{k}^T+
\boldsymbol{h}_d^H \bigg)\boldsymbol{w}\right|^2
\stackrel{(a)}{=}&\left| \bigg(\boldsymbol{v}^H
\boldsymbol{D}_z\boldsymbol{B} + \boldsymbol{h}_{d}^H \bigg) \boldsymbol{w}\right|^2 \nonumber \\
\stackrel{(b)}{=}& \left| \bigg(\boldsymbol{v}^H \boldsymbol{
\Phi} + \boldsymbol{h}_{d}^H \bigg) \boldsymbol{w}\right|^2
\end{align}
where in $(a)$, we define $\boldsymbol{v}\triangleq
[e^{j\alpha_1}\phantom{0}\ldots\phantom{0} e^{j\alpha_K}]^H$,
$\boldsymbol{D}_z\triangleq \text{diag}(z_1,\ldots, z_K)$ and
$\boldsymbol{B}\triangleq[\boldsymbol{b}_1\phantom{0}\ldots\phantom{0}
\boldsymbol{b}_K]^T $, and in $(b)$, we define
$\boldsymbol{\Phi}\triangleq\boldsymbol{D}_z\boldsymbol{B}$. Hence
(\ref{opt7}) can be simplified as
\begin{align}
\max_{\boldsymbol{w},{\boldsymbol{v}}}\quad &
\left| \bigg(\boldsymbol{v}^H  \boldsymbol{ \Phi} + \boldsymbol{h}_{d}^H \bigg) \boldsymbol{w}\right|^2 \nonumber\\
\text{s.t.} \quad & \|\boldsymbol{w}\|_2^2\leq p \label{opt8}
\end{align}
Note that for any given $\boldsymbol{v}$, an optimal precoding
vector $\boldsymbol{w}$ , i.e. the MRT solution, is given as
\begin{align}
\boldsymbol{w}^{\star} = \sqrt{p}\frac{\left(\boldsymbol{v}^H
\boldsymbol{\Phi}+\boldsymbol{h}_{d}^H\right)^H}{\|
\boldsymbol{v}^H\boldsymbol{\Phi}+\boldsymbol{h}_{d}^H\|_2}
\label{MRT-solution}
\end{align}
Substituting the optimal precoding vector $\boldsymbol{w}^{\star}$
into the objective function of (\ref{opt8}) yields
\begin{align}
\max_{{\boldsymbol{v}}}\quad &
\| \boldsymbol{v}^H  \boldsymbol{ \Phi} + \boldsymbol{h}_d^H\|_2^2 \nonumber\\
\text{s.t.} \quad & \boldsymbol{v} =
[e^{j\alpha_1}\phantom{0}\ldots\phantom{0}  e^{j\alpha_K}]^H
\label{opt12}
\end{align}
or equivalently,
\begin{align}
 \max_{\boldsymbol{v}} \quad & \boldsymbol{v}^H \boldsymbol{\Phi \Phi}^H
 \boldsymbol{v}+\boldsymbol{v}^H \boldsymbol{\Phi} \boldsymbol h_d +
 \boldsymbol h_d^H\boldsymbol{\Phi} ^H \boldsymbol{v} \nonumber \\
\text{s.t.} \quad & |v_k| = 1 \quad \forall k \label{opt9}
\end{align}
Due to the unit circle constraint placed on entries of
$\boldsymbol{v}$, the above optimization (\ref{opt9}) is
non-convex. In the following, we first develop a sub-optimal
semidefinite relaxation (SDR)-based method to solve (\ref{opt9}).
Then, we show that by utilizing the near-orthogonality among array
response vectors, a near-optimal analytical solution of
(\ref{opt9}) can be obtained.

\subsubsection{A SDR-Based Approach for Solving (\ref{opt9})} Note
that (\ref{opt9}) is a non-convex quadratically constrained
quadratic program (QCQP), which can be reformulated as a
homogeneous QCQP by introducing an auxiliary variable $t$:
\begin{align}
\max_{\boldsymbol{\bar{v}}} \quad &\boldsymbol{\bar{v}}^H
\boldsymbol{R} \boldsymbol{\bar{v}} \nonumber
\\
\text{s.t.}\quad &|\bar{v}_k| = 1\quad \forall
k\in\{1,\ldots,K+1\} \label{opt10}
\end{align}
where
\begin{align*}
    \boldsymbol{R}\triangleq\left[\begin{matrix}
        \boldsymbol{\Phi \Phi}^H & \boldsymbol{\Phi} \boldsymbol h_d\\
        \boldsymbol h_d^H \boldsymbol \Phi^H & 0
    \end{matrix}
    \right],
\quad    \boldsymbol{\bar{v}}\triangleq \left[\begin{matrix}
        \boldsymbol v\\
        t
    \end{matrix}\right]
\end{align*}
and $\bar{v}_k$ denotes the $k$th entry of $\boldsymbol{\bar{v}}$.
Note that $\boldsymbol{\bar{v}}^H \boldsymbol{R}
\boldsymbol{\bar{v}}=\text{tr}(\boldsymbol{R}\boldsymbol{V})$,
where $\boldsymbol{V}\triangleq\boldsymbol{\bar{v}}
\boldsymbol{\bar{v}}^H$ is a rank-one and positive semidefinite
matrix, i.e. ${\boldsymbol V \succcurlyeq} 0$. Relaxing the
rank-one constraint, the problem \eqref{opt10} becomes
\begin{align}
\max_{\boldsymbol{V}}  \quad &\text{tr}(\boldsymbol R \boldsymbol
V) \nonumber
\\
\text{s.t.} \quad & \boldsymbol{V}_{k,k} = 1 \quad \forall k\nonumber \\
\quad &{\boldsymbol V} \succcurlyeq 0 \label{opt11}
\end{align}
where $\boldsymbol{V}_{k,k}$ denotes the $k$th diagonal element of
$\boldsymbol{V}$. The problem above is a standard convex
semidefinite program (SDP) which can be solved by convex tools
such as CVX. It can be readily verified that the computational
complexity for solving (\ref{opt11}) is at the order of
$\mathcal{O}((K+1)^6)$. In general, the optimal solution of
\eqref{opt11} is not guaranteed to be a rank-one matrix. To obtain
a rank-one solution from the obtained higher-rank solution of
\eqref{opt11}, one can follow the steps described in
\cite{SoZhang07}.

\subsubsection{Near-Optimal Analytical Solution To (\ref{opt9})}
The SDR-based method discussed above does not yield a closed-form
solution and is computationally expensive. In the following, we
propose a near-optimal analytical solution to (\ref{opt9}) via
utilizing the near-orthogonality among different steering vectors
$\{\boldsymbol{b}_k\}$.

Suppose a uniform linear array is employed at the BS. It can be
easily verified that the inner product of the two distinct array
response vectors $\boldsymbol{b}_i$ and $\boldsymbol{b}_j$ is
given as
\begin{align}
    \boldsymbol{b}_i^H \boldsymbol{b}_j = \frac{1}{N} \frac{1-e^{jN\delta} }{ 1-e^{j \delta}}
\end{align}
where
\begin{align}
\delta \triangleq\frac{2 \pi d}{\lambda} \left(\sin(\phi_i)-
\sin(\phi_j) \right)
\end{align}
in which $d$ denotes the distance between neighboring antenna
elements, $\lambda$ is the signal wavelength, and $\phi_i$ denotes
the angle of departure associated with the array response vector
$\boldsymbol{b}_i$. It is clear that
\begin{align}
|\boldsymbol{b}_i^H \boldsymbol{b}_j | \rightarrow 0, \quad
\text{as}\quad N\rightarrow \infty
\end{align}
In \cite{Chen13}, it was shown that asymptotic orthogonality still
holds for uniform rectangular arrays. Due to the small wavelength
at the mmWave frequencies, the antenna size is very small, which
allows a large number (hundreds or thousands) of array elements to
be packed into a small area in practical systems. In addition, to
improve the coverage, it is expected that different IRSs should be
deployed such that they, as seen from the BS, are sufficiently
separated in the angular domain, i.e. the angles of departure
$\{\phi_k\}$ are sufficiently separated. Taking into account these
factors, it is reasonable to assume that different steering
vectors $\{\boldsymbol{b}_k\}$ are near-orthogonal to each other,
i.e. $|\boldsymbol{b}_i^H\boldsymbol{b}_j|\approx 0$. Therefore we
have
\begin{align}
\boldsymbol v^H \boldsymbol{\Phi \Phi}^H
\boldsymbol v &= \boldsymbol v^H \boldsymbol{D}_z
\boldsymbol{B}\boldsymbol{B}^H\boldsymbol{D}_z\boldsymbol{v} \nonumber \\
& \approx \sum_{k=1}^K z_k^2 =\| \boldsymbol z\|_2^2 \label{eqn8}
\end{align}
which is a constant independent of the vector $\boldsymbol{v}$.
Consequently, the optimization (\ref{opt9}) can be simplified as
\begin{align}
\max_{\boldsymbol v} \quad & \boldsymbol{v}^H \boldsymbol{\Phi} \boldsymbol h_d +
\boldsymbol h_d^H\boldsymbol{\Phi} ^H \boldsymbol {v}\nonumber \\
\text{s.t.}  \quad &|v_k| = 1 \quad \forall k \label{opt13}
\end{align}
It can be easily verified that the optimal solution to
(\ref{opt13}) is given by
\begin{align}
\boldsymbol{v} ^{\star} = [ e^{-j {\rm arg} (u_{1})}
\phantom{0}\ldots\phantom{0} e^{-j {\rm arg} (u_{K})} ]^H
\label{v-opt}
\end{align}
where $\boldsymbol{u}\triangleq\boldsymbol{\Phi} \boldsymbol h_d$,
and $u_{k}$ denotes the $k$th entry of $\boldsymbol{u}$. After the
near-optimal phase vector $\boldsymbol{v}$ is obtained, it can be
substituted into (\ref{MRT-solution}) to obtain the precoding
vector $\boldsymbol{w}$. Also, the near-optimal diagonal phase
shift matrix associated with the $k$th IRS is given by
\begin{align}
\boldsymbol{\Theta}_k^{\star}=\boldsymbol{\bar{\Theta}}_k^{\star}
e^{j\alpha_k^{\star}} \label{opt-theta-m}
\end{align}
where $e^{j\alpha_k^{\star}}$ is the $k$th entry of
$\boldsymbol{v}^{\star}$. To calculate this near-optimal
analytical solution, the dominant operation includes calculating
$\boldsymbol{u} = \boldsymbol{\Phi} \boldsymbol{h}_d$ and
$\boldsymbol{g}_k = \lambda_k ( \boldsymbol{h}_{r_k}^{\ast} \circ
\boldsymbol{a}_k)$, which has a computational complexity of the
order $\mathcal{O}(\max(KN,M))$.

\subsection{Power Scaling Law} We now analyze the scaling law of
the average received power in the general multi-IRS setup with
respect to the number of passive elements $M$. Again, we set $p=1$
for simplicity. Our main results are summarized as follows.
\begin{proposition}
Assume $\boldsymbol {h}_{r_k}\sim {\cal CN}(0, \varrho_{r_k}^2
I)$, $\boldsymbol h_d \sim {\cal CN}(0, \varrho_d^2 I)$, and the
BS-IRS channel is characterized by a rank-one geometric model
given as
\begin{align}
\pmb G_k =  \sqrt{N M}\rho_k \boldsymbol{a}_k \boldsymbol{b}_k^T
\end{align}
where $\rho_{k}$ denotes the complex gain associated with the LOS
path between the BS and the $k$th IRS, $\boldsymbol{a}_k \in
\mathbb{C}^{M}$ and $\boldsymbol{b}_k\in\mathbb{C}^{N}$ are
normalized array response vectors associated with the IRS and the
BS, respectively. Then the average received power attained by the
near-optimal analytical solution is given by
\begin{align}
\gamma  \approx & N M^2
\sum_{k=1}^K\bigg(\frac{{\pi}\varrho_{r_k}^2} {4}
\mathbb{E}[|\rho_k|^2]\bigg)+ 2M\sqrt{N} \sum_{k=1}^K\frac{ {\pi}
 \varrho_{r_k} \varrho_d} {4} \mathbb E\left[
|\rho_k|\right]  \nonumber \\
&+  NM  \left(2-\frac{\pi}{2} \right)
\sum_{k=1}^K  \mathbb{E}[|\rho_k|^2]\frac{\varrho_{r_k}^2}{2}+ N\varrho_d^2 \label{pro2}
\end{align}
\label{proposition2}
\end{proposition}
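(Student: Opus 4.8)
The plan is to follow the same route as the proof of Proposition \ref{proposition1} in Appendix \ref{appA}, now carrying along the sum over the $K$ IRSs and invoking the near-orthogonality of the BS steering vectors $\{\boldsymbol{b}_k\}$. First I would observe that, with the MRT precoder (\ref{MRT-solution}) and $p=1$, the power delivered by the proposed near-optimal solution is exactly $\gamma=\|\boldsymbol{v}^{\star H}\boldsymbol{\Phi}+\boldsymbol{h}_d^H\|_2^2=\boldsymbol{v}^{\star H}\boldsymbol{\Phi}\boldsymbol{\Phi}^H\boldsymbol{v}^{\star}+2\,\mathrm{Re}\!\big(\boldsymbol{v}^{\star H}\boldsymbol{\Phi}\boldsymbol{h}_d\big)+\|\boldsymbol{h}_d\|_2^2$. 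Using $\boldsymbol{B}\boldsymbol{B}^H\approx\boldsymbol{I}$ (the $\{\boldsymbol{b}_k\}$ are unit-norm and near-orthogonal, cf. (\ref{eqn8})) together with $|v_k^{\star}|=1$, the first term collapses to $\sum_{k=1}^K z_k^2$ with $z_k=\|\boldsymbol{g}_k\|_1$; and, by the choice of $\boldsymbol{v}^{\star}$ in (\ref{v-opt}), the cross term becomes $2\sum_{k=1}^K|u_k|=2\sum_{k=1}^K z_k\,|\boldsymbol{b}_k^T\boldsymbol{h}_d|$, where $\boldsymbol{u}=\boldsymbol{\Phi}\boldsymbol{h}_d$. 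Hence $\gamma\approx\sum_{k=1}^K z_k^2+2\sum_{k=1}^K z_k\,|\boldsymbol{b}_k^T\boldsymbol{h}_d|+\|\boldsymbol{h}_d\|_2^2$, and it remains to average each piece.

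Next I would substitute $\boldsymbol{G}_k=\sqrt{NM}\rho_k\boldsymbol{a}_k\boldsymbol{b}_k^T$, so that $\lambda_k=\sqrt{NM}\rho_k$ and, since $\boldsymbol{a}_k$ is a normalized array response vector with entries of modulus $1/\sqrt{M}$, $z_k=\|\boldsymbol{g}_k\|_1=\sqrt{N}\,|\rho_k|\sum_{m=1}^M|h_{r_k,m}|$. Because the $h_{r_k,m}\sim\mathcal{CN}(0,\varrho_{r_k}^2)$ are i.i.d., each $|h_{r_k,m}|$ is Rayleigh with $\mathbb{E}[|h_{r_k,m}|]=\tfrac{\sqrt{\pi}}{2}\varrho_{r_k}$ and $\mathbb{E}[|h_{r_k,m}|^2]=\varrho_{r_k}^2$; moreover $\boldsymbol{b}_k^T\boldsymbol{h}_d\sim\mathcal{CN}(0,\varrho_d^2)$ since $\|\boldsymbol{b}_k\|_2=1$, giving $\mathbb{E}[|\boldsymbol{b}_k^T\boldsymbol{h}_d|]=\tfrac{\sqrt{\pi}}{2}\varrho_d$, and $\mathbb{E}[\|\boldsymbol{h}_d\|_2^2]=N\varrho_d^2$. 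Expanding $\big(\sum_m|h_{r_k,m}|\big)^2$ into its $M$ diagonal and $M(M-1)$ off-diagonal terms and taking expectations (assuming $\rho_k$ independent of $\boldsymbol{h}_{r_k}$, and $\boldsymbol{h}_{r_k},\boldsymbol{h}_d$ mutually independent) yields $\mathbb{E}[z_k^2]=N\,\mathbb{E}[|\rho_k|^2]\,\varrho_{r_k}^2\big(\tfrac{\pi}{4}M^2+(1-\tfrac{\pi}{4})M\big)$ and $\mathbb{E}\big[z_k|\boldsymbol{b}_k^T\boldsymbol{h}_d|\big]=\mathbb{E}[z_k]\,\mathbb{E}[|\boldsymbol{b}_k^T\boldsymbol{h}_d|]=\tfrac{\pi}{4}M\sqrt{N}\,\varrho_{r_k}\varrho_d\,\mathbb{E}[|\rho_k|]$. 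Summing these three contributions over $k$ and regrouping the $M^2$, $M$, and constant terms reproduces (\ref{pro2}).

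The main obstacle, exactly as in the single-IRS case, is the second moment of the $\ell_1$-norm $\|\boldsymbol{g}_k\|_1$: the off-diagonal expectations factor through $\mathbb{E}[|h_{r_k,m}|]^2=\tfrac{\pi}{4}\varrho_{r_k}^2$ rather than through $\mathbb{E}[|h_{r_k,m}|^2]$, which is precisely what generates the $(2-\tfrac{\pi}{2})$ coefficient on the linear-in-$M$ term, so one must keep these two contributions separate. Care is also needed to justify each factorization of expectations from the stated independence assumptions, and to argue that the near-orthogonality approximation $\boldsymbol{B}\boldsymbol{B}^H\approx\boldsymbol{I}$ — which simultaneously kills the $\sum_{i\neq j}$ cross terms in (\ref{upper-bound}) and the residual $\boldsymbol{v}^{\star H}\boldsymbol{\Phi}\boldsymbol{\Phi}^H\boldsymbol{v}^{\star}-\|\boldsymbol{z}\|_2^2$ — is what makes the resulting expression approximate rather than exact. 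Everything else is the same bookkeeping as in Appendix \ref{appA}.
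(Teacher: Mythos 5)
Your proposal is correct and follows essentially the same route as the paper's Appendix~B proof: expand $\|(\boldsymbol{v}^{\star})^H\boldsymbol{\Phi}+\boldsymbol{h}_d^H\|_2^2$, invoke the near-orthogonality approximation to reduce the quadratic term to $\|\boldsymbol{z}\|_2^2$ and the cross term to $2\sum_k|u_k|$, and then evaluate the Rayleigh moments of $\|\boldsymbol{h}_{r_k}\|_1$ and of $\boldsymbol{b}_k^T\boldsymbol{h}_d$ term by term. The only cosmetic difference is that you factorize $\mathbb{E}\bigl[z_k|\boldsymbol{b}_k^T\boldsymbol{h}_d|\bigr]$ directly by independence, whereas the paper conditions on $z_k$ via $\boldsymbol{u}\mid\boldsymbol{z}\sim\mathcal{CN}(0,\varrho_d^2\,\mathrm{diag}(z_1^2,\ldots,z_K^2))$ — the same computation.
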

\begin{proof}
See Appendix \ref{appB}.
\end{proof}

We see that, similar to the single IRS case, the average received
signal power attained by the near-optimal analytical solution
scales quadratically with the number of reflecting elements $M$.
Also, as expected, the average received signal power is a sum of
the received signal power from multiple IRSs, which indicates that
better performance can be achieved by deploying multiple IRSs.

\section{Extension to Discrete Phase Shifts} \label{sec:low-resolution-PS}
In previous sections, to simplify our problem, we assume that
elements of IRSs have an infinite phase resolution. Nevertheless,
due to hardware limitations, the phase shift may not take an
arbitrary value, instead, it may have to be chosen from a finite
set of discrete values \cite{TanSun18,WuZhang19d}. Specifically,
the set of discrete values for the phase shift is defined as
\begin{align}
\theta_{k,m} \in {\mathcal{F}} \triangleq \left \{ 0,
\frac{2\pi}{2^b} ,\ldots, \frac{2\pi (2^b-1)}{2^b} \right \}
\end{align}
where $b$ denotes the resolution of the phase shifter. To meet the
finite resolution constraint imposed on the phase shifters, a
simple yet effective solution is to let each phase shift,
$\theta_{k,m}$, take on a discrete value that is closest to its
optimal (or near-optimal) value $\theta_{k,m}^{\star}$ obtained in
previous sections, i.e.
\begin{align}
\theta_{k,m}^{\ast} =  \arg\min_{\theta \in \mathcal{F}} \quad
|\theta-\theta_{k,m}^{\star}| \label{discretized-theta}
\end{align}
where $\theta_{k,m}^{\star}$ denotes the $m$th diagonal entry of
$\boldsymbol{\Theta}_k^{\star}$. In the following, we analyze the
impact of the phase discretization on the system performance. Let
\begin{align}
\gamma(b) &= \mathbb E \bigg[ \bigg \| \sum_{k=1}^K
\boldsymbol{h}_{r_k}^H \boldsymbol{\Theta}_k^{\ast}
\boldsymbol{G}_k  + \boldsymbol{h}_d\bigg\|_2^2 \bigg]
\label{gammab}
\end{align}
denote the average received power attained by our solution with
$b$-bit phase shifters, where
$\boldsymbol{\Theta}_k^{\ast}=\text{diag}(\theta_{k,1}^{\ast},\ldots,\theta_{k,M}^{\ast})$
with $\theta_{k,m}^{\ast}$ given by (\ref{discretized-theta}).
Without loss of generality, we assume the transmit signal power
$p=1$. Our main results are summarized as follows.
\begin{proposition}
Assume $\boldsymbol {h}_{r_k}\sim {\cal CN}(0, \varrho_{r_k}^2
I)$, and the BS-$k$th IRS channel is characterized by (\ref{Gk}).
As $M \rightarrow \infty$, we have
\begin{align}
\eta(b) \triangleq \frac{\gamma(b)}{\gamma(\infty)} = \left(
\frac{2^b}{\pi} \sin\left( \frac{\pi}{2^b}\right)\right)^2
\label{eta}
\end{align} \label{proposition3}
It is not difficult to verify that $\eta(b)$ increases
monotonically with $b$ and approaches $1$ as $b \rightarrow
\infty$.
\end{proposition}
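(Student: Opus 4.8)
The plan is to show that the effect of $b$-bit quantization, in the large-$M$ regime, is to attenuate the dominant (order-$M^2$) term of the received power by a deterministic factor, and that this factor is precisely $(\frac{2^b}{\pi}\sin(\frac{\pi}{2^b}))^2$. First I would recall from Proposition~\ref{proposition1} (or its multi-IRS analogue, Proposition~\ref{proposition2}) that $\gamma(\infty)$ is dominated, as $M\to\infty$, by the term $NM^2\sum_k \frac{\pi\varrho_{r_k}^2}{4}\mathbb{E}[|\rho_k|^2]$, which comes from $\sum_k |\boldsymbol{\bar{\theta}}_k^{\star T}\boldsymbol{g}_k|^2 = \sum_k \|\boldsymbol{g}_k\|_1^2$ scaling like $M^2$. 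So it suffices to track what happens to each $\|\boldsymbol{g}_k\|_1$-type sum when the ideal phases $\theta_{k,m}^{\star}$ are replaced by their nearest quantized values $\theta_{k,m}^{\ast}$.

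Next I would write the quantization error as $e_{k,m}\triangleq \theta_{k,m}^{\ast}-\theta_{k,m}^{\star}$. With a $b$-bit uniform quantizer on $[0,2\pi)$, the step size is $2\pi/2^b$, and by the rounding rule \eqref{discretized-theta} each $e_{k,m}$ lies in $[-\pi/2^b,\pi/2^b)$. The key modeling assumption — which I would state explicitly — is that, because $\theta_{k,m}^{\star}=-\arg(g_{k,m})$ and the $g_{k,m}$ are (products of) i.i.d.\ complex Gaussian quantities, the ideal phases are uniformly distributed on $[0,2\pi)$ and independent across $m$; hence the errors $e_{k,m}$ are i.i.d.\ uniform on $[-\pi/2^b,\pi/2^b)$ and independent of the magnitudes $|g_{k,m}|$. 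Under quantization, the coherent combination at IRS $k$ becomes $\sum_m |g_{k,m}| e^{j e_{k,m}}$ instead of $\sum_m |g_{k,m}|$. Taking expectations and using independence, $\mathbb{E}[\sum_m |g_{k,m}| e^{je_{k,m}}] = \mathbb{E}[|g_{k,m}|]\cdot \sum_m \mathbb{E}[e^{je_{k,m}}]$, and a direct integral gives $\mathbb{E}[e^{je_{k,m}}] = \frac{2^b}{2\pi}\int_{-\pi/2^b}^{\pi/2^b} e^{jt}\,dt = \frac{2^b}{\pi}\sin(\frac{\pi}{2^b})$, which is real and positive. So the mean combined amplitude is scaled by exactly $\frac{2^b}{\pi}\sin(\frac{\pi}{2^b})$; and since the dominant $M^2$ term of $\gamma$ is quadratic in this amplitude (and, for large $M$, concentrates around its mean by a law-of-large-numbers argument so that variance contributions are lower order in $M$), the leading term of $\gamma(b)$ equals $\left(\frac{2^b}{\pi}\sin(\frac{\pi}{2^b})\right)^2$ times the leading term of $\gamma(\infty)$. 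Dividing, the sub-leading $O(M)$ and $O(1)$ terms wash out as $M\to\infty$, yielding \eqref{eta}. Monotonicity and the limit as $b\to\infty$ then follow since $x\mapsto \frac{\sin x}{x}$ is increasing on $(0,\pi/2]$ as $x\downarrow 0$ — more precisely, $\frac{2^b}{\pi}\sin(\frac{\pi}{2^b}) = \mathrm{sinc}(\pi/2^b)$ increases to $1$.

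The main obstacle I anticipate is making the "concentration / leading-order" step rigorous rather than heuristic: one must argue that $\frac{1}{M}\sum_m |g_{k,m}| e^{j e_{k,m}} \to \mathbb{E}[|g_{k,1}|]\,\mathbb{E}[e^{je_{k,1}}]$ and $\frac{1}{M}\sum_m|g_{k,m}|\to \mathbb{E}[|g_{k,1}|]$ almost surely (or in mean square), so that the ratio of the squared sums converges to the squared ratio of expectations, and that the cross terms (IRS-to-user times direct-path, and the inter-IRS terms) together with $|\boldsymbol{h}_d^H\boldsymbol{w}|^2$ are $O(M)$ or smaller and therefore negligible in the ratio as $M\to\infty$. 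A secondary subtlety is justifying the uniformity and independence of the ideal phases $\theta_{k,m}^{\star}$ — this rests on the circular symmetry of the Gaussian entries of $\boldsymbol{h}_{r_k}$ and on the array-response vectors $\boldsymbol{a}_k$ having unit-modulus entries (so Hadamard multiplication by $\boldsymbol{a}_k$ does not disturb the phase distribution), which should be spelled out. I would handle these by invoking the strong law of large numbers entrywise and the explicit order-of-$M$ bookkeeping already established in Propositions~\ref{proposition1}--\ref{proposition2}, and defer the detailed computation to an appendix.
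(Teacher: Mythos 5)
Your proposal matches the paper's proof in essence: both model the quantization errors as i.i.d.\ uniform on $[-\pi/2^b,\pi/2^b]$, independent of the channel magnitudes, compute $\mathbb{E}[e^{j\Delta\theta_{k,m}}]=\frac{2^b}{\pi}\sin(\pi/2^b)$, and show that the dominant $O(M^2)$ term of the average received power is attenuated by the square of this factor while the $O(M)$ and $O(1)$ terms vanish in the ratio as $M\rightarrow\infty$ (the paper simply evaluates $\gamma(b)$ in closed form via second moments, so no law-of-large-numbers/concentration step is needed, since $\gamma(b)$ is defined as an expectation). One small correction: the inter-IRS cross terms are not ``$O(M)$ or smaller''---they are of order $M^2|\boldsymbol{b}_i^H\boldsymbol{b}_j|$ and are discarded by the same near-orthogonality approximation (\ref{eqn8}) already used in Proposition \ref{proposition2}; even if retained, in expectation they factor as $\mathbb{E}[\tilde{z}_i]\,\mathbb{E}[\tilde{z}_j^{\ast}]$ and carry the same squared attenuation factor, so the limit in (\ref{eta}) is unaffected.
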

\begin{proof}
See Appendix \ref{appC}.
\end{proof}

This proposition provides a quantitative analysis of the average
received signal power in the multiple-IRS assisted system with
discrete phases shifts. We see that, when compared with the
receive power achieved by IRSs with infinite-resolution phase
shifters, the receive signal power attained by our proposed
solution decreases by a constant factor that depends on the number
of quantization levels $b$. Specifically, we have $\eta(1) =
0.4053$, $\eta(2) = 0.8106$ and $\eta(3) = 0.9496$. Note that a
similar result was also reported in \cite{WuZhang19d}.
Nevertheless, the result in \cite{WuZhang19d} is derived by
considering a simple scenario where both the transmitter and the
receiver are equipped with a single antenna. The extension of the
result in \cite{WuZhang19d} to the multiple transmit antenna
scenario is not straightforward. Also, in \cite{WuZhang19d}, only
a single IRS is employed, whereas our work considers a more
general case where multiple IRS are deployed to assist the
downlink communication.

\section{Simulation Results} \label{sec:simulation-results}
We now present simulation results to illustrate the performance of
the proposed IRS-assisted precoding solutions. In our simulations,
we consider a scenario where the BS employs a ULA with $N$
antennas, and each IRS consists of a uniform rectangular array
(URA) with $M = M_y M_z$ reflecting elements, in which $M_y$ and
$M_z$ denote the number of elements along the horizontal axis and
vertical axis, respectively. The BS-user channel is generated
according to the following geometric channel model
\cite{AyachRajagopal14}:
\begin{align}
\boldsymbol{h}_{d}& =
\sqrt{\frac{N}{L_d}}\sum_{l=1}^{L_d}\alpha_{l} \boldsymbol{a}_{t}(
\phi_{l}) \label{channel-model}
\end{align}
where $L_d$ is the number of paths, $\alpha_l$ is the complex gain
associated with the $l$th path, $\phi_l$ is the associated angle
of departure, $\boldsymbol{a}_t \in \mathbb C^{N}$ represents the
normalized transmit array response vector. The complex gain
$\alpha_l$ is generated according to a complex Gaussian
distribution \cite{AkdenizLiu14}
\begin{align}
\alpha_l\sim {\cal CN}(0,10^{-0.1\kappa}) \label{eqn9}
\end{align}
with $\kappa$ given as
\begin{align}
\kappa= a+10b \log_{10}(\tilde{d}) + \xi
\end{align}
in which $\tilde{d}$ denotes the distance between the transmitter
and the receiver, and $\xi\sim \mathcal{N}(0,\sigma_{\xi}^2)$. The
values of $a$, $b$ $\sigma_{\xi}$ are set to be $a=72$, $b=2.92$,
and $\sigma_{\xi}=8.7$dB, as suggested by real-world NLOS channel
measurements \cite{AkdenizLiu14}.

The IRS-user channel and the BS-IRS channel are generated
according the aforementioned geometric SV model in LOS scenarios.
Specifically, the IRS-user channel is denoted by
\begin{align}
\boldsymbol{h}_{r} &= \sqrt{\frac{M}{L_r}} \left( \varrho_{0}
\boldsymbol{a}_{t} ( \vartheta_{a,0},\vartheta_{e,0}) +
\sum_{l=1}^{L_r-1}\varrho_{l} \boldsymbol{a}_{t}(
\vartheta_{a,l},\vartheta_{e,l}) \right)
\end{align}
where $L_r$ is the number of paths, $ \varrho_{0} $ denotes the
complex gain associated with the LOS component, $\varrho_l$ is the
complex gain associated with the $l$th NLOS path,
$\vartheta_{a,l}$ ($\vartheta_{e,l}$) denotes the azimuth
(elevation) angle of departure associated with the IRS-user path,
$\boldsymbol{a}_t \in \mathbb C^{M}$ represents the normalized
transmit array response vector.

On the other hand, the BS-IRS channel is characterized by the SV
channel model given as
\begin{align}
\boldsymbol{G} =& \sqrt{\frac{{NM}}{L}} \bigg(\alpha
_{0}\boldsymbol {a}_r(\vartheta_{a},\vartheta_{e})
\boldsymbol{a}_t^H(\phi) \nonumber \\
&+ \sum_{i=1}^{L-1}\alpha_{i}
{a}_r(\vartheta_{a_l},\vartheta_{e_l})
\boldsymbol{a}_t^H(\phi_l)\bigg)
\end{align}
where $ \alpha_{0} $ denotes the complex gain with the LOS
component,  $\vartheta_{a_l}$ ($\vartheta_{e_l}$) denotes the
azimuth (elevation) angle of arrival associated with $l$th NLOS
path, $\phi_l$ is the associated angle of departure,
$\boldsymbol{a}_r \in \mathbb C^{M}$ and $\boldsymbol{a}_t \in
\mathbb C^{N}$ represent the normalized receive and transmit array
response vectors, respectively. The complex gain $\alpha_0$ and
$\varrho_0$ are generated according to (\ref{eqn9}). The values of
$a$, $b$ $\sigma_{\xi}$ are set to be $a=61.4$, $b=2$, and
$\sigma_{\xi}=5.8$dB as suggested by LOS real-world channel
measurements \cite{AkdenizLiu14}. The Rician factor (defined as as
the ratio of the energy in the LOS path to the sum of the energy
in other NLOS paths) is set to be $13.2$dB according to
\cite{Muhi-EldeenIvrissimtzis10}. Also, unless specified
otherwise, we assume $N=64$, $M_y=10$, and $M_z=20$ in our
experiments. Other parameters are set as follows: $p=30$dBm,
$\sigma^2=-90$dBm. The average receive SNR is defined as $\mathbb
E[10 \log_{10} \frac{\gamma}{\sigma^2}]$, where $\gamma$ is the
received signal power. Since the noise power is fixed, the
difference between the average receive SNR and the average
received signal power is a constant. All results are averaged over
$1000$ random channel realizations.

\begin{figure}[!t]
\centering {\includegraphics[width=2.7in]{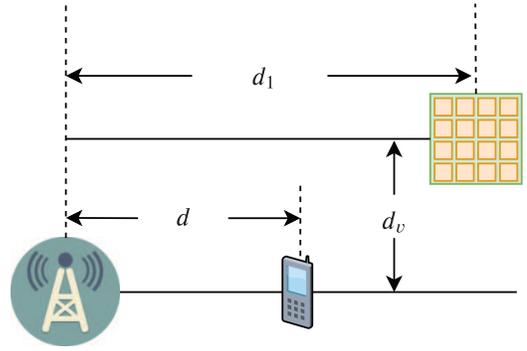}} \caption{
Simulation setup for the single IRS case.} \label{fig2}
\end{figure}

\subsection{Results for Single IRS}
We consider a setup where the IRS lies on a horizontal line which
is in parallel to the line that connects the BS and the user (Fig.
\ref{fig2}). The horizontal distance between the BS and the IRS is
set to $d_1=119$ meters and the vertical distance between two
lines is set to $d_v=0.6$ meters. Let $d$ denote the distance
between the BS and the user. The BS-IRS distance and the IRS-user
distance can then be respectively calculated as
$d_2=\sqrt{d^2+d_v^2}$ and $d_3=\sqrt{(d_1-d)^2+d_v^2}$.

Fig. \ref{fig3} plots the average receive SNRs of our proposed
solutions with both continuous-valued and discrete-valued phase
shifts. Note that for our proposed solutions, the BS-IRS channel
is approximated as a rank-one channel by ignoring those NLOS
paths. The upper bound of the average receive SNR obtained in
\cite{WuZhang18} is included for comparison. Also, to show the
benefits brought by IRSs, a conventional system without IRSs is
considered, where the optimal MRT solution is employed. We see
that our proposed solution with continuous-valued phase shifters
nearly achieves the upper bound of the average receive SNR, which
verifies the optimality of our proposed closed-form solution and
suggests that neglecting the NLOS paths between the BS and the IRS
has little impact on the system performance. Also, with 2-bit
low-resolution phase shifts, our proposed solution can achieve an
average receive SNR close to that attained by assuming
infinite-precision phase shifters. Moreover, it is observed that
for the system without IRSs, the average receive SNR decreases
rapidly as the user moves away from the BS. As a comparison, this
issue can be relieved and the signal coverage can be substantially
enhanced via the use of IRSs.

\begin{figure}[!t]
    \centering
    {\includegraphics[width=3.5in]{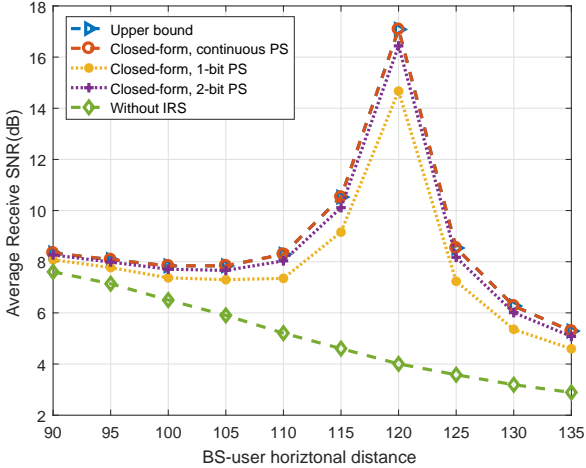}}
    \caption{Average receive SNR versus BS-user horizontal distance, $d$.} \label{fig3}
\end{figure}


In Fig. \ref{fig4}, we plot the average receive SNR versus the number of
reflecting elements at the IRS when $d=119$m, where we fix
$M_y=20$ and increase $M_z$. From Fig. \ref{fig4}, we observe that
the average receive SNR increases quadratically with the number of
reflecting elements. Specifically, the difference between the
receive SNRs when $M=300$ and $M=600$ is approximately equal to
$6$dB, which coincides well with our analysis. In addition, the
average receive SNR loss due to the use of low-resolution phase shifters
is analyzed and given by (\ref{eta}). Specifically, we have
$\eta(1) = -3.9224$dB and $\eta(2)= -0.9121$dB. It can be observed
that simulation results are consistent with our theoretical
result.

\begin{figure}[!t]
    \centering
    {\includegraphics[width=3.5in]{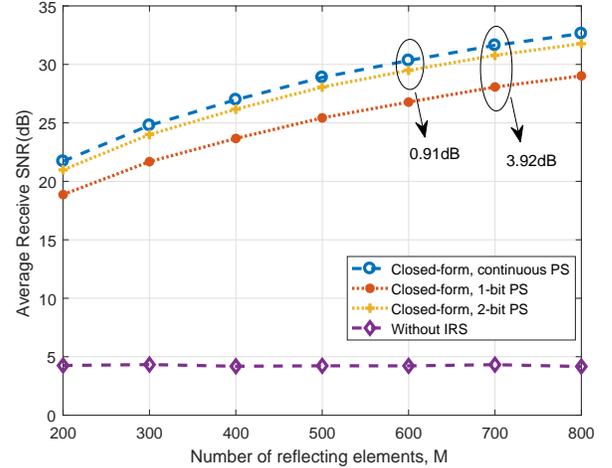}}
    \caption{Average receive SNR versus number of reflecting elements, $M$.} \label{fig4}
\end{figure}

\begin{figure}[!t]
\centering {\includegraphics[width=2.7in]{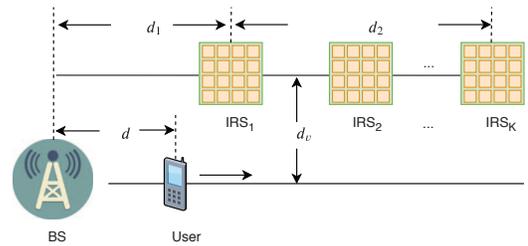}} \caption{
Simulation setup for the multi-IRS case.} \label{fig5}
\end{figure}

\subsection{Results for Multiple IRSs}
We consider a multi-IRS setup as depicted in Fig. \ref{fig5},
where $K$ IRSs are equally spaced on a straight line which is in
parallel with the line connecting the BS and the user.
Specifically, the horizontal distance $d_1$ between the BS and the
first IRS is set to $d_1=100$m and the vertical distance is set to
$d_v=0.6$m. Also, the distance between the nearest IRS and the
farthest IRS is set to be $d_{2}=30$m. We set $K=3$ if not
specified otherwise. In this example, the SV channel model used to
characterize the BS-IRS channel only contains a LOS component.

Fig. \ref{fig6} depicts the average receive SNRs attained by our
proposed SDR-based approach and the near-optimal analytical
solution as a function of the BS-user distance. To verify the
effectiveness of the proposed solutions, an upper bound on the
average receive SNR is obtained by solving the relaxed SDP problem
(\ref{opt11}). We see that the curve of the analytical solution
almost coincides with the upper bound, which validates the
near-optimality of the proposed analytical solution.

\begin{figure}[!t]
    \centering
    {\includegraphics[width=3.5in]{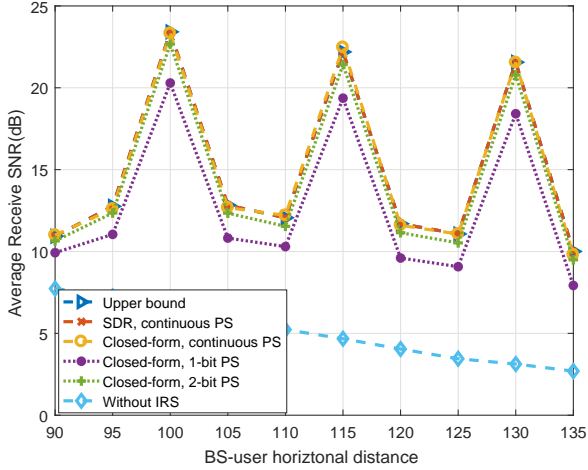}}
    \caption{Average receive SNR versus BS-user horizontal distance for multiple IRSs.} \label{fig6}
\end{figure}

In Fig. \ref{fig7}, we plot the average receive SNRs of different
schemes versus the number of reflecting elements at each IRS,
where we fix $M_y =20$ and change $M_z$. It can be observed that
the squared improvement also holds true for the near optimal
analytical solution. Specifically, when $M=300$, the receive SNR
at the user is approximate to $25$dB, while it increases up to
$31$ dB when the number of reflecting elements doubles, i.e.
$M=600$. Also, we see that the near-optimal analytical solution
achieves an average receive SNR that is closer to the upper bound when $N$
becomes larger, which corroborates our claim that our proposed
analytical solution is asymptotically optimal when $N$ approaches
infinity. Also, it can be seen that the receive SNR loss due to
discretization coincides well with our analysis.

\begin{figure*}[!t]
 \subfigure[Average receive SNR versus number of reflecting elements, $N= 64$.]
 {\includegraphics[width=3.5in]{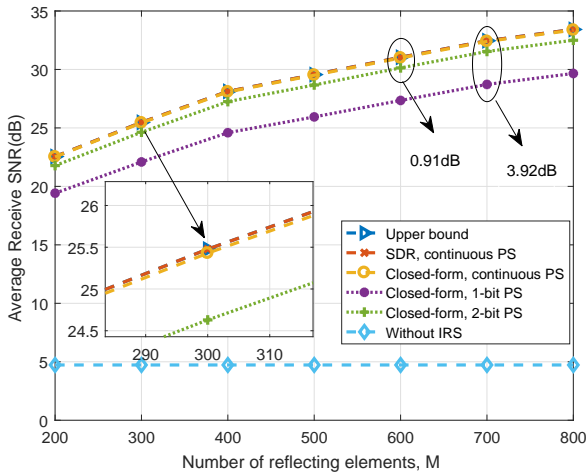}} \hfil
\subfigure[Average receive SNR versus number of reflecting elements, $N=
128$.]{\includegraphics[width=3.5in]{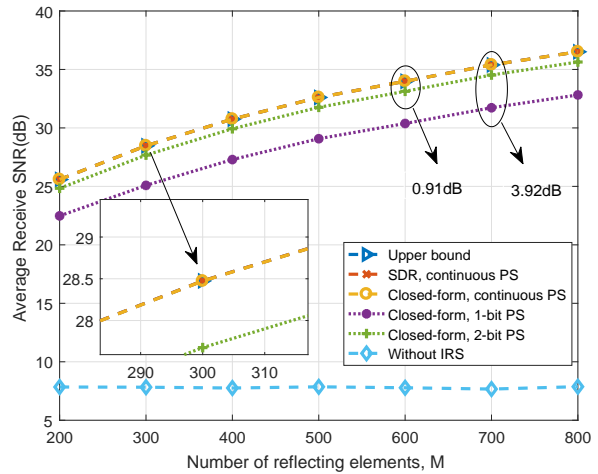}}
\captionsetup{justification=centering}
\caption{Average receive SNR versus number of reflecting elements, $M$}
\label{fig7}
\end{figure*}

To show the robustness of the IRS-assisted system against
blockages, we calculate the average throughput and the outage
probability for our proposed near-optimal analytical solution. The
average throughput $R_a$ and the outage probability are
respectively defined as
\begin{align}
R_a \triangleq\mathbb E \bigg[  \log_{2}
\left(1+\frac{\gamma}{\sigma^2} \right) \bigg]
\end{align}
\begin{align}
\mathbb{P}_{\text{out}}(\tau)= {\mathbb P}(R_a<\tau)
\end{align}
where $\tau$ denotes the required threshold level and set to
$\tau=0.5$ according to \cite{YangDu15}. We assume that the BS-IRS
link is always connected. Also, the blockage probabilities of the
BS-user link and the IRS-user link are assumed to be the same in
our simulations. From Fig. \ref{fig8}(b), we observe that the
outage probability can be substantially reduced by deploying IRSs.
Also, the more the IRSs are deployed, the lower the outage
probability can be achieved. Particularly, when $K=4$, the outage
probability reduces to zero if the link blockage probability is
less than $P<0.1$. This result shows the effectiveness of IRSs in
overcoming the blockage issue that prevents the wider applications
of mmWave communications.

\begin{figure*}[!t]
 \subfigure[Average throughput vs. the link blockage probability.]
 {\includegraphics[width=3.5in]{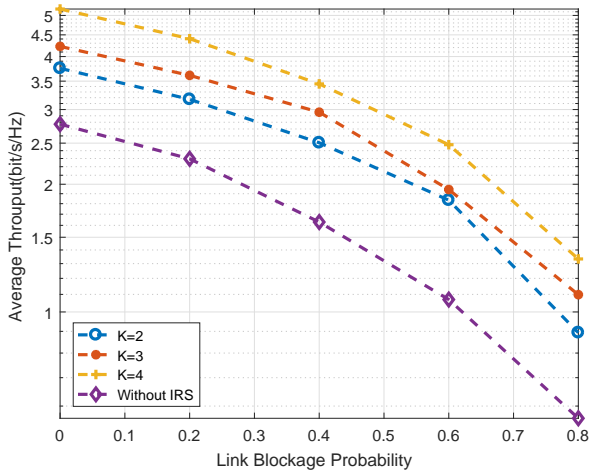}} \hfil
\subfigure[Outage probability vs. the link blockage
probability.]{\includegraphics[width=3.5in]{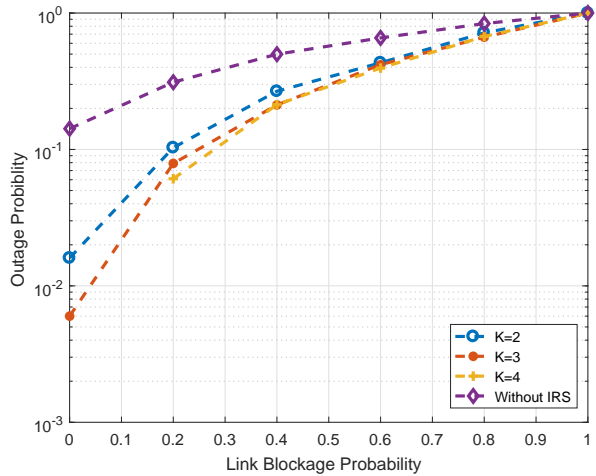}}
\caption{Average throughput and outage probability versus the link
blockage probability, $P$.} \label{fig8}
\end{figure*}

\section{Conclusions} \label{sec:conclusions}
In this paper, we studied the problem of joint active and passive
precoding design for IRS-assisted mmWave systems, where multiple
IRSs are deployed to assist the data transmission from the BS to a
single antenna user. The objective is to maximize the received
signal power by jointly optimizing the transmit precoding vector
at the BS and the phase shift parameters user by IRSs for passive
beamforming. By exploiting some important characteristics of
mmWave channels, we derived a closed-form solution for the single
IRS case, and a near-optimal analytical solution for the multi-IRS
case. Simulation results were provided to illustrate the
optimality and near-optimality of proposed solutions. Our results
also showed that IRSs can help create effective virtual LOS paths
to improve robustness of mmWave systems against blockages.

\useRomanappendicesfalse
\appendices

\section{Proof of Proposition \ref{proposition1}} \label{appA}
When the optimal active and passive beamforming solution is
employed, from (\ref{opt5}), we know that the received signal
power at the user is given as
\begin{align}
\| e^{j \alpha^{\star}} \boldsymbol{h}_r^H
\boldsymbol{\bar{\Theta}}^{\star} \boldsymbol{G}
+\boldsymbol{h}_d^H\|^2_2
&= \|z e^{j \alpha^{\star}}  \boldsymbol{b}^T + \boldsymbol{h}_d^H\|^2_2 \nonumber \\
&= z^2 +  2|z|| \boldsymbol{b}^T\boldsymbol{h}_d| +
\boldsymbol{h}_d^H\boldsymbol{h}_d \label{Pr:Single1}
\end{align}
where
\begin{align}
z \triangleq \sqrt{NM}\rho\boldsymbol{h}_r^H
\boldsymbol{\bar{\Theta}}^{\star}\boldsymbol{a}  =
\sqrt{N}|\rho|\cdot\|\boldsymbol{h}_r\|_{1}
\end{align}
in which the latter equality comes from the fact that
$\rho\boldsymbol{h}_r^H
\boldsymbol{\bar{\Theta}}^{\star}\boldsymbol{a}=\|\rho(\boldsymbol{h}_r^{\ast}\circ\boldsymbol{a})\|_1
=\frac{1}{\sqrt{M}}|\rho|\|\boldsymbol{h}_r\|_1$. Therefore we
have
\begin{align}
\gamma^{\star}=\mathbb{E}[z^2 +  2|z||
\boldsymbol{b}^T\boldsymbol{h}_d| +
\boldsymbol{h}_d^H\boldsymbol{h}_d]
\end{align}
We first calculate $\mathbb{E}[z]$. Since $\boldsymbol{h}_{r}\sim
{\cal CN}(0,\varrho_{r}^2\boldsymbol{I})$, the mean and variance
of the modulus of $m$th entry of $\boldsymbol{h}_{r}$ are
respectively given as
\begin{align}
 \mathbb E[|h_{r_{m}}|] =& \frac{\sqrt{\pi}  \varrho_{r} } {2} \\
\text{Var}\left[|h_{r_{m}}| \right]  =&  \left(2-\frac{\pi}{2}
\right) \frac{\varrho_r^2}{2}
\end{align}
Thus
\begin{align}
\mathbb E\left[  |h_{r_{m}}| ^2\right] = \text{Var}
\left[|h_{r_{m}}| \right]  +  \left(  \mathbb E \left[|h_{r_{m}}|
\right]  \right)^2 = \varrho_r^2
\end{align}
Hence $\mathbb{E}[z]$ can be computed as
\begin{align}
\mathbb  E[z] &= \sqrt{N}\mathbb E[|\rho|]\sum_{m=1}^M \mathbb
E[|h_{r_m}|]\nonumber\\ &= M\sqrt{N}\mathbb E[|\rho|]
\frac{\sqrt{\pi}\varrho_{r}}{2}
\end{align}
and
\begin{align}
& \mathbb E \bigg[ \bigg(\sum_{m=1}^M{|h_{r_{m}}}| \bigg)^2\bigg]  \nonumber \\
= &
\mathbb E \bigg[ \sum_{m=1}^M  | h_{r_{m}} | ^2 + \sum_{i=1}^M
\sum_{j \neq i}^M | h_{r_{i}} || h_{r_{j}} |  \bigg] \nonumber \\
= &   \sum_{m=1}^M  \mathbb E \left[| h_{r_{m}} | ^2   \right]+
\sum_{i=1}^M \sum_{j \neq i}^M  \mathbb E \left[ | h_{r_{i}} | \right]
\mathbb E \left[ | h_{r_{j}} | \right] \nonumber\\
= &  M^2 \frac{\pi \varrho_{r}^2}{4} + M\left(2-\frac{\pi}{2}
\right)\frac{\varrho_{r}^2}{2}
\end{align}
Therefore $\mathbb{E}[z^2]$ is given as
\begin{align}
\mathbb{E}[z^2] &= {N} \mathbb {E}[|\rho| ^2] \mathbb{E}[\|\boldsymbol{h}_r\|_{1}^2]\nonumber\\
& = {N}\mathbb E[|\rho|^2] \mathbb{E}\bigg[ \bigg(\sum_{m=1}^{M}|h_{r_{m}}| \bigg)^2 \bigg]\nonumber\\
& = NM^2 \frac{{\pi}  \varrho_{r}^2} {4} \mathbb{E}[|\rho|^2] + N
M\left(2-\frac{\pi}{2} \right) \mathbb E[|\rho|^2]
\frac{\varrho_{r}^2}{2} \label{eqn2}
\end{align}
Now let us examine $\mathbb{E}[\boldsymbol{b}^T\boldsymbol{h}_d]$.
It is clear that
\begin{align}
     \boldsymbol{b}^T\boldsymbol{h}_d \sim {\cal CN}(0, \varrho_d^2)
\end{align}
As a result, we have
\begin{align}
    \mathbb{E}[|\boldsymbol{b}^T\boldsymbol{h}_d|]=\frac{\sqrt{\pi}}{2} \varrho_d
\end{align}
and
\begin{align}
    \mathbb {E} (|z|| \boldsymbol{b}^T\boldsymbol{h}_d|) =
    M \sqrt{N}\mathbb E(|\rho|)  \frac{{\pi}  \varrho_r \varrho_d}
    {4} \label{eqn3}
\end{align}
In addition, it can be easily verified that
\begin{align}
    \mathbb E[ \pmb h_d^H\pmb h_d ] = \sum_{n=1}^{N} \mathbb E[
    |h_{d_n}|^2]=N\varrho_d^2 \label{eqn4}
\end{align}
Combining (\ref{eqn2}), (\ref{eqn3}) and (\ref{eqn4}), we reach
(\ref{power-scaling-law}). This completes our proof.

\section{Proof of Proposition \ref{proposition2}} \label{appB}
When the analytical active and passive beamforming solution is
employed, from \eqref{opt12}, we know that the received signal
power at the user is given as
\begin{align}
&\| (\boldsymbol{v}^{\star})^H \boldsymbol{ \Phi} + \boldsymbol{h_d}^H\|_2^2  \nonumber \\
=& {(\boldsymbol{v}^{\star}})^H \boldsymbol{\Phi \Phi}^H
{\boldsymbol{v}^{\star}}+(\boldsymbol{v}^{\star})^H
\boldsymbol{\Phi} \boldsymbol h_d + \boldsymbol
h_d^H\boldsymbol{\Phi} ^H {\boldsymbol{v}^{\star}}
+ \boldsymbol{h}_d ^H \boldsymbol{h}_d \nonumber \\
\stackrel{(a)}{\approx}  & \| \boldsymbol z\|_2^2 + 2
|(\boldsymbol{v}^{\star})^H \boldsymbol{\Phi} \boldsymbol h_d |
 + \boldsymbol{h}_d ^H \boldsymbol{h}_d
\label{Pr:Multi1}
\end{align}
where $\boldsymbol{v}^{\star}$ is given by (\ref{v-opt}), $(a)$ is
due to (\ref{eqn8}), and
\begin{align}
z_k =\sqrt{N} |\rho_k| \| \boldsymbol{h}_{r_k}\|_{1}
\label{def:z2}
\end{align}
Therefore we have
\begin{align}
\gamma \approx\mathbb E \left[ \| \boldsymbol z\|_2^2 +2
|(\boldsymbol{v}^{\star})^H \boldsymbol{\Phi} \boldsymbol h_d | +
\boldsymbol{h}_d ^H \boldsymbol{h}_d  \right]
\end{align}
We first calculate $\mathbb{E}[|z_k|]$. Since $\boldsymbol
{h}_{r_k}\sim {\cal CN}(0, \varrho_{r_k}^2 I)$, we have
\begin{align}
 \mathbb E \left[|h_{r_{k,m}}|\right] = \frac{\sqrt{\pi}  \varrho_{r_k} } {2}
\end{align}
\begin{align}
\text{Var} \left[|h_{r_{k,m}}|\right] = \left(2- \frac{\pi}{2}
\right) \frac{\varrho_{r_{k}}^2 }{2}
\end{align}
\begin{align}
 \mathbb E \left[| h_{r_{k,m}} | ^2 \right] = \text{Var}\left[|h_{r_{k,m}}|\right]  +
 \left( \mathbb E \left[|h_{r_{k,m}}|\right] \right)^2 = \varrho_{r_{k}}^2
\end{align}
Hence $\mathbb E[|z_k|]$ can be computed as
\begin{align}
\mathbb  E \left[|z_k|\right] &= \sqrt{N}\mathbb E
\left[|\rho_k|\right] \sum_{m=1}^M \mathbb E
\left[|h_{r_{k,m}}|\right]\nonumber\\
&= M\sqrt{N}\mathbb E\left[|\rho_k|\right]  \frac{\sqrt{\pi}
\varrho_{r_k}} {2}
\end{align}
and
\begin{align}
\mathbb E \bigg[ \bigg(\sum_{m=1}^M{|h_{r_{k,m}}}| \bigg)^2\bigg]
&= \mathbb E \bigg[ \sum_{m=1}^M  | h_{r_{k,m}} | ^2 +\sum_{i=1}^M
\sum_{j \neq i}^M | h_{r_{k,i}} || h_{r_{k,j}} |  \bigg] \nonumber \\
& = M \varrho_{r_{k}}^2 + M(M-1) \frac{\pi \varrho^2_{r_{k}} }{4}    \nonumber \\
& = M^2 \frac{\pi \varrho_{r_k}^2}{4} + M\left(2-\frac{\pi}{2}
\right)\frac{\varrho_{r_k}^2}{2}
\end{align}
Therefore, we have
\begin{align}
\mathbb{E}[|z_k|^2] & =
{N}\mathbb E[|\rho_k|^2] \mathbb{E}\bigg[ \bigg(\sum_{m=1}^{M}|h_{r_{k,m}}| \bigg)^2 \bigg]\nonumber\\
& = NM^2 \mathbb{E}[|\rho_k|^2]  \frac{{\pi} \varrho_{r_k}^2}{4} +
NM\mathbb{E}[|\rho_k|^2] \left(2-\frac{\pi}{2}
\right)\frac{\varrho_{r_k}^2}{2}
\end{align}
and
\begin{align}
    \mathbb{E} \left[\|\boldsymbol{z}\|_2^2\right]
    =& \mathbb{E} \left[\sum_{k=1}^K |z_k|^2\right] \nonumber \\
     =&  NM^2  \sum_{k=1}^K E \left[|\rho_k|^2 \right] \frac{{\pi}  \varrho_{r_k}^2} {4} \nonumber \\
  &  +  NM  \left(2-\frac{\pi}{2} \right) \sum_{k=1}^K  \mathbb{E}[|\rho_k|^2]\frac{\varrho_{r_k}^2}{2}
    \label{eqn5}
\end{align}

Now we examine $\mathbb E \left[ |({\boldsymbol{v}^{\star}})^H
\boldsymbol{\Phi} \boldsymbol h_d |\right] $. From (\ref{v-opt}),
we arrive

\begin{align}
\mathbb E \left[ |({\boldsymbol{v}^{\star}})^H \boldsymbol{\Phi}
\boldsymbol h_d |\right] = \mathbb E \left[
|({\boldsymbol{v}^{\star}})^H \boldsymbol u \right] = \mathbb E
\left[  \sum_{k=1}^K |u_k|  \right] = \sum_{k=1}^K \mathbb E
\left[ |u_k| \right]
\end{align}
we can verify that
\begin{align}
\boldsymbol{u} = \boldsymbol{\Phi} \boldsymbol{h}_d \sim {\mathcal
{CN}}(0,\varrho_d^2 {\text{diag}}(z_1^2,\ldots,z_K^2))
\end{align}
Since $z_k$  is also a random variable, we can calculate $\mathbb
E[|u_k|]$ as
\begin{align}
\mathbb E[|u_k|] &= \mathbb E[ \mathbb E [|u_k| |z_k] ] \nonumber \\
&= \frac{\sqrt{\pi} \varrho_d}{2} \mathbb E[z_k] \nonumber \\
& =  M \sqrt{N}\frac{\pi \varrho_d \varrho_{r_k} }{4} \mathbb
E[|\rho_k|]
\end{align}
As a result, we have
\begin{align}
\mathbb E \left[ |({\boldsymbol{v}^{\star}})^H \boldsymbol{\Phi}
\boldsymbol h_d |\right] = M \sqrt{N} \sum_{k=1}^K \frac{\pi
\varrho_d \varrho_{r_k} }{4} \mathbb E[|\rho_k|] \label{eqn6}
\end{align}

Additionally, it can be easily verified that
\begin{align}
    \mathbb E \left[ \pmb h_d^H\pmb h_d \right] = \sum_{n=1}^{N}
    \mathbb E \left[|h_{d_n}|^2 \right]= N\varrho_d^2
     \label{eqn7}
\end{align}
Combining (\ref{eqn5}), (\ref {eqn6}) and (\ref{eqn7}), we reach
(\ref{pro2}). This completes our proof.

\section{Proof of Proposition \ref{proposition3}} \label{appC}
To facilitate our analysis, we rewrite \eqref{discretized-theta}
as
\begin{align}
\boldsymbol{\Theta}_k^{\ast} = e^{j \alpha_k^{\star}}
\boldsymbol{ \bar \Theta} ^{\star} _k\Delta\boldsymbol{\Theta}_k
\triangleq  e^{j \alpha_k^{\star}}   \boldsymbol{ \tilde \Theta}_k
\label{opt-dis}
\end{align}
where $ \boldsymbol{ \tilde \Theta}_k \triangleq \boldsymbol{ \bar
\Theta} ^{\star} _k\Delta\boldsymbol{\Theta}_k  $,
$\Delta\boldsymbol{\Theta}_k \triangleq {\text {diag}} (e^{j\Delta
\theta_{k,m}}, \ldots, e^{j\Delta \theta_{k,M} } )$, in which
$\Delta \theta_{k,m}$ is the discretization error. By substituting
(\ref{opt-dis}) into (\ref{gammab}), the average received signal
power is given by
\begin{align}
\gamma(b)    = & \mathbb E \bigg[ \bigg\| \sum_{k=1}^K
\sqrt{NM}\rho_k\boldsymbol{h}_{r_k}^H
\boldsymbol{\Theta}_k^{\ast}\boldsymbol{a}_k\boldsymbol{b}_k^T +
\boldsymbol{h}_d\bigg\|_2^2 \bigg]          \nonumber \\
 =&  \mathbb E \bigg[ \bigg\| \sum_{k=1}^K  e^{j\alpha_k^{\star}}
\; \sqrt{NM}\rho_k \boldsymbol{h}_{r_k}^H \boldsymbol{\tilde
\Theta}_k\boldsymbol{a}_k\boldsymbol{b}_k^T +
\boldsymbol{h}_d\bigg\|_2^2 \bigg]
\nonumber \\
\stackrel{(a)}  = & \mathbb E \bigg[ \bigg\| \sum_{k=1}^K
e^{j\alpha_k^{\star}} \; {\tilde z}_k \boldsymbol{b}_k^T +
\boldsymbol{h}_d\bigg\|_2^2 \bigg]  \nonumber \\
\stackrel{(b)} =  &\mathbb E \bigg[ \bigg \|
(\boldsymbol{v^{\star}})^H
\boldsymbol{\tilde D}_z \boldsymbol{B}  + \boldsymbol{h}_d   \bigg\|_2^2  \bigg] \nonumber \\
\stackrel{(c)} \approx & \mathbb E \bigg[  \sum_{k=1}^K
|\tilde{z}_k|^2  + (\boldsymbol{v^{\star}})^H \boldsymbol{\tilde
D}_z \boldsymbol{B} \boldsymbol{h}_d  + \boldsymbol{h}_d ^H
(\boldsymbol{\tilde D}_z \boldsymbol{B})^H
(\boldsymbol{v^{\star}}) \nonumber \\
&+ \boldsymbol{h}_d^H
\boldsymbol{h}_d \bigg] \nonumber \\
=& \mathbb E \left[  \sum_{k=1}^K |\tilde{z}_k|^2\right] +
2\mathbb{R} \bigg\{  \mathbb E \left [  (\boldsymbol{v^{\star}})^H
\boldsymbol{\tilde D}_z \boldsymbol{B} \boldsymbol{h}_d  \right]
\bigg\} +   \mathbb E \left[ \boldsymbol{h}_d^H
\boldsymbol{h}_d\right]
\end{align}
where in $(a)$, we define
\begin{align}
\tilde{z}_k \triangleq \sqrt{NM}\rho_k\boldsymbol{h}_{r_k}^H
\boldsymbol{\tilde \Theta}_k  \boldsymbol{a}_k =
\sqrt{N}|\rho_k|\cdot \sum_{m=1}^M |h_{r_{k,m}}|e^{j\Delta
\theta_{k,m}}
\end{align}
in $(b)$, we define $\boldsymbol{\tilde{D}}_{z} \triangleq {\text
{diag}} (\tilde{z}_1, \ldots, \tilde{z}_K)$ and $(c)$ comes from
\eqref{eqn8}.

Since discrete phase shift values in $\mathcal{F}$ are uniformly
spaced, discretization errors $\{\Delta \theta_{k,m} \}$ can be
considered as independent random variables uniformly distributed
on the interval $[- \frac{\pi}{2^b},\frac{\pi}{2^b}]$. Thus
$\mathbb E[\tilde z_k]$ can be calculated as
\begin{align}
\mathbb E[\tilde z_k] & =  \mathbb E \left[ \sqrt{N}|\rho_k|\cdot
\sum_{m=1}^M |h_{r_{k,m}}|e^{j\Delta
\theta_{k,m}} \right] \nonumber \\
& = \sqrt{N} \mathbb E[ |\rho_k| ] \sum_{m=1}^M \mathbb E[
|h_{r_{k,m}}| ] \mathbb E[ e^{j\Delta
\theta_{k,m}} ] \nonumber \\
&=  M\sqrt{N} \mathbb E[ |\rho_k| ]
\frac{\sqrt{\pi}\varrho_{r_k}}{2} \frac{2^b}{\pi} \sin{\left(
\frac{\pi}{2^b} \right)} \label{zk-dis}
\end{align}
in which
\begin{align}
\mathbb E [ e^{j\Delta \theta_{k,m}} ] = \mathbb E [ -e^{j\Delta
\theta_{k,m} }] = \frac{2^b}{\pi} \sin{\left( \frac{\pi}{2^b}
\right)}
\end{align}
Also, $\mathbb{E}[|\tilde{z}_k|^2] $ can be computed as
\begin{align}
\mathbb E [|\tilde{z}_k|^2] = &  N \mathbb E[|\rho_k|^2]  \mathbb
E\bigg[ \sum_{m=1}^M
|h_{r_{k,m}}|^2 \bigg] \nonumber \\
&  +N \mathbb E[|\rho_k|^2] \mathbb E \bigg[\sum_{m=1}^M \sum_{i
\neq m}^M |h_{r_{k,m}}| |h_{r_{k,i}}|  e^{j(\Delta \theta_{k,m}-
\Delta \theta_{k,i})} \bigg] \nonumber \\
 = &  NM  \varrho_{r_k}^2 \mathbb E[|\rho_k|^2]  \nonumber \\
&  +  N M(M-1)  \mathbb E[|\rho_k|^2]  \frac{{\pi}
\varrho_{r_k}^2} {4} \left(\frac{2^b}{\pi} \sin{\left(
\frac{\pi}{2^b} \right)} \right)^2 \label{zk2-dis}
\end{align}

Next, from (\ref{v-opt}), we arrive at
\begin{align}
  & \mathbb{R} \bigg\{  \mathbb E \left [  (\boldsymbol{v^{\star}})^H
\boldsymbol{\tilde D}_z \boldsymbol{B} \boldsymbol{h}_d  \right]
\bigg\}
 \stackrel{(a)}=   \mathbb{R}\bigg\{ \mathbb E \left[ \sum_{k=1}^K|s_k| \tilde z_k  \right]  \bigg \} \nonumber \\
=&  \sum_{k=1}^K \mathbb E [|s_k|] \mathbb{R} \bigg \{  \mathbb E[
\tilde z_k   ]
\bigg \} \nonumber \\
 = & \sum_{k=1}^K \frac{\sqrt{\pi} \varrho_d}{2} \mathbb E[  \tilde z_k   ] \nonumber \\
 =& \sum_{k=1}^K  M\sqrt{N}   \frac{\sqrt{\pi} \varrho_d}{2} \mathbb E[ |\rho_k| ]
 \frac{\sqrt{\pi}\varrho_{r_k}}{2} \frac{2^b}{\pi} \sin{\left( \frac{\pi}{2^b} \right)} \nonumber \\
=&  \frac{2^b}{\pi} \sin{\left( \frac{\pi}{2^b} \right)}
M\sqrt{N} \sum_{k=1}^K  \frac{\pi \varrho_d \varrho_{r_k} }{4}
\mathbb E[ |\rho_k| ] \label{mid-dis}
\end{align}
where in $(a)$, $s_k$ is the $k$th entry of $\boldsymbol{s}$ and
$\boldsymbol{s} \triangleq \boldsymbol{B} \boldsymbol{h}_d \sim
\mathcal{CN}(0,\varrho_d^2 \boldsymbol{I})$,  $(a)$ is due to the
fact that $\alpha_k^{\star} = -{\text {arg}} (u_k) = -{\text
{arg}} (s_k)$.

Combining \eqref{zk2-dis}, \eqref{mid-dis} and \eqref{eqn7}, the
average received power can be given as
\begin{align}
\gamma(b) = & NM \sum_{k=1}^K  \varrho_{r_k}^2 \mathbb E[|\rho_k|^2]  \nonumber \\
&  +  N M(M-1) \sum_{k=1}^K \mathbb E[|\rho_k|^2]  \frac{{\pi}
\varrho_{r_k}^2} {4} \left(\frac{2^b}{\pi} \sin{\left(
\frac{\pi}{2^b} \right)} \right)^2
\nonumber \\
& + 2 \frac{2^b}{\pi} \sin{\left( \frac{\pi}{2^b} \right)}
M\sqrt{N} \sum_{k=1}^K  \frac{\pi \varrho_d \varrho_{r_k} }{4}
\mathbb E[ |\rho_k| ]  + N\varrho_d^2
\end{align}
It can be easily obtained that the ratio of $\gamma(b)$ to
$\gamma(\infty)$ is given by (\ref{eta}) as $M \rightarrow
\infty$. This completes our proof.

\bibliography{newbib}
\bibliographystyle{IEEEtran}


\end{document}